\documentclass[11pt]{llncs} 

\usepackage{fullpage}

\usepackage{amssymb}

\newcommand{\PR}[1]{\mathbf{Pr}\left[#1\right]} 
\newcommand{\E}[1]{\mathbf{E}\left[#1\right]} 
\newcommand{\eps}{\epsilon} 
\newcommand{\bh}{\mathrm{h_b}}

\newcommand{\BDD}{\textsc{Bounded Degree Deletion}}
\newcommand{\EQC}{\textsc{Equitable Coloring}} 
\newcommand{\MMO}{\textsc{Graph Balancing}} 
\newcommand{\CDS}{\textsc{Capacitated Dominating Set}}
\newcommand{\CVC}{\textsc{Capacitated Vertex Cover}}
\newcommand{\MC}{\textsc{Max Cut}} 
\newcommand{\EDS}{\textsc{Edge Dominating Set}}

\newcommand{\OPT}{\mathrm{OPT}}

\pagestyle{plain}

\begin{document}

\title{Parameterized Approximation Schemes using Graph Widths}
\author{Michael Lampis\thanks{Research partially supported by a Scientific Grant-in-Aid from Ministry of Education, Culture, Sports, Science and Technology of Japan.}}
\institute{Research Institute for Mathematical Sciences (RIMS), Kyoto University \\ \email{mlampis@kurims.kyoto-u.ac.jp}}
\maketitle

\begin{abstract}

Combining the techniques of approximation algorithms and parameterized
complexity has long been considered a promising research area, but relatively
few results are currently known. In this paper we study the parameterized
approximability of a number of problems which are known to be hard to solve
exactly when parameterized by treewidth or clique-width. Our main contribution
is to present a natural randomized rounding technique that extends well-known
ideas and can be used for both of these widths. Applying this very
generic technique we obtain approximation schemes for a number of problems,
evading both polynomial-time inapproximability and parameterized intractability
bounds.

\end{abstract}

\section{Introduction}

Approximation algorithms and parameterized complexity are two of the most
popular ways of dealing with NP-hard optimization problems. Nevertheless, the
two sets of techniques are usually treated independently. It's therefore a very
natural question whether combining the techniques of both theories can be used
to obtain algorithmic results which are out of reach for each one of them
separately.  This has often been identified as a promising research field (see
\cite{Marx08} for a survey), but its development has so far been somewhat
limited.  The goal of this paper is to add some results in this area by
designing parameterized approximation schemes for problems which are both
parameterized intractable (W-hard) and hard to approximate in polynomial time
(APX-hard). 

The problems we will focus on are optimization problems on graphs of bounded
treewidth or clique-width. These two graph widths are of central importance to
parameterized complexity theory. At the same time, they play a significant role
in the design of approximation algorithms, since subroutines employing them are
often used as building blocks of larger algorithms.  Therefore, understanding
the extent to which we can efficiently approximate problems which remain W-hard
for these widths is of potentially great importance from several points of
view.

In this paper we want to show that many such hard problems  actually turn out
to be well-approximable in FPT time.  Perhaps the easiest way to explain our
aim is to state a representative example of the type of results we will
establish.

\begin{theorem}[partial statement] 

There exists a randomized $(1+\eps)$-approximation algorithm for \MC\ running
in time $\left(\frac{\log n}{\eps}\right)^{O(w)} n^{O(1)}$, where $w$ is the
input graph's clique-width.

\end{theorem}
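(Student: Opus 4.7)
The plan is to start from the standard exact DP for \MC\ on a clique-width $w$-expression tree. At each node, corresponding to a labelled induced subgraph $H$, partial cuts are classified by their \emph{profile} $(a_1,b_1,\ldots,a_w,b_w)$, where $a_i$ and $b_i$ count the vertices of label $i$ placed on the two sides of the cut; the table records the best cut value on edges in $H$ consistent with the profile. Introductions set one of $a_i,b_i$ to $1$; relabel nodes merge two label coordinates; union nodes sum the two child profiles component-wise and sum the stored values; a join on labels $i,j$ leaves the profile untouched and adds $a_ib_j+b_ia_j$ to the value. This exact DP has $n^{O(w)}$ states, which is far from FPT. To reach the target $(\log n/\eps)^{O(w)}$, I would restrict each coordinate to geometric bucket values $B_k=\lceil(1+\delta)^k\rceil$ with $\delta=\Theta(\eps)$.

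The technical heart is a \emph{randomized} rounding that keeps the bucketed coordinates unbiased. Whenever a coordinate $c\in[B_k,B_{k+1})$ is produced (at introduction or at a union node), I would round up to $B_{k+1}$ with probability $(c-B_k)/(B_{k+1}-B_k)$ and down to $B_k$ otherwise, with fresh independent bits per coordinate per operation. By induction over the expression tree, every bucketed coordinate $\hat a_i$ has $\E{\hat a_i}=a_i$, and coordinates belonging to different labels or different sides of the cut are rounded with independent randomness. Since a join's increment $\hat a_i\hat b_j+\hat b_i\hat a_j$ is bilinear in such independent, unbiased variables, linearity gives $\E{V_C}=\mathrm{cut}(C)$ for the DP value $V_C$ tracked along any fixed cut $C$.

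The remaining work splits into concentration and cut recovery. For any fixed $C$, each join contributes variance $O(\delta^2(a_ib_j+b_ia_j)^2)$ and the random bits at distinct union ancestors are independent, so a Chebyshev-style argument bounds the standard deviation of $V_C$ by $O(\delta m)$, which is $O(\eps)\cdot\OPT$ because $\OPT\ge m/2$ for \MC. The main obstacle, and the step I expect to require the most care, is converting this per-cut concentration into a statement about the DP's \emph{maximum} over states: a single unlucky sequence of roundings could otherwise inflate the reported value. I would handle this by proving the matching upper tail $V_{\tilde C}\le(1+\eps)\,\mathrm{cut}(\tilde C)$ for every fixed cut $\tilde C$, and taking a union bound over the polynomial number of distinct DP profiles (shrinking $\delta$ by a $\mathrm{poly}\log$ factor to absorb the bound, then repeating the algorithm $O(\log n)$ times to amplify). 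The cut $\hat C$ recovered by backtracking then satisfies $\mathrm{cut}(\hat C)\ge V_{\hat C}/(1+\eps)\ge V_{C^*}/(1+\eps)\ge(1-O(\eps))\OPT$, which is the claimed $(1+\eps)$-approximation after rescaling $\eps$.
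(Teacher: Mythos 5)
Your DP skeleton matches the paper's: profiles recording, for every label, the number of vertices on each side of the cut (the paper likewise stores both intersections so the program is purely additive), geometric bucketing of these counts, randomized rounding at every addition, and recovery of the cut by backtracking. The gap is in the concentration step, and it is exactly the issue the paper's Section on Approximate Addition Trees is built to overcome. A clique-width expression can be completely unbalanced (caterpillar-like, depth $\Omega(n)$), so a single bucketed coordinate is re-rounded up to $\Omega(n)$ times before it ever enters a join. Unbiasedness makes its error a martingale, but the variance accumulates with the depth: each rounding injects noise of magnitude up to the local bucket width, so the standard deviation of a coordinate is of order $\delta\cdot(\text{value})\cdot\sqrt{\text{depth}}$ rather than $O(\delta\cdot\text{value})$. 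Consequently the claimed per-join variance $O\bigl(\delta^2(a_ib_j+b_ia_j)^2\bigr)$ and the bound $O(\delta m)$ on the standard deviation of $V_C$ are not justified (and are false for deep expressions); an unbiased-walk analysis via Chebyshev or Azuma only gives concentration at the $\sqrt{n}$ scale, which would force $\delta=O(\eps/\sqrt{n})$ and destroy the $(\log n/\eps)^{O(w)}$ table size. The paper states this obstacle explicitly and resolves it by proving that the rounding process is \emph{self-correcting} (Lemma \ref{lem:self-correct}: adding an operand with smaller error shrinks the relative error of the sum in proportion to the error gap) and by running a moment-generating-function argument (Lemmas \ref{lem:paths}, \ref{lem:trees}, Theorem \ref{thm:trees}) that yields exponentially small failure probability with $\delta=O(\eps^2/\log^6 n)$ --- note, not $\Theta(\eps)$.

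The union-bound step has a related problem. The events you must control are one per table entry ever created (so that the maximum entry does not overshoot and the optimal cut's entry survives), and there are $(\log n/\eps)^{O(w)}\cdot n^{O(1)}$ of these, not polynomially many. Chebyshev's polynomial tail cannot absorb such a union bound without shrinking $\delta$ by roughly the square root of the number of events, which again ruins the running time; and repeating the algorithm $O(\log n)$ times does not help, because each single run already needs the simultaneous (union-bounded) guarantee before amplification makes sense. What is needed is an exponentially small per-entry failure probability, which is precisely what the paper's MGF analysis delivers ($n^{-\log n}$ in Theorem \ref{thm:trees}). Your scheme itself is not hopeless --- an unbiased rounding in the linear scale could plausibly be analyzed with Freedman-type martingale bounds exploiting the fact that small increments inject only small rounding variance, with $\delta=\Theta(\eps^2/\mathrm{polylog}\,n)$ --- but as written, the choice $\delta=\Theta(\eps)$, the depth-independent variance claim, and the Chebyshev-plus-polynomial-union-bound argument do not establish the theorem.
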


\setcounter{theorem}{0}

\MC\ is of course a problem of central importance in the contexts of both
approximability and parameterized complexity. It is APX-hard (so an
approximation ratio of $1+\eps$ is probably impossible in polynomial time) and
W-hard parameterized by clique-width (so the fastest exact algorithm probably
needs time roughly $n^w$). Our main point here is that using a
\emph{parameterized approximation} approach we can evade these lower bounds,
leading to a $(1+\eps)$-approximation running in time only $(\log n)^{O(w)}$,
that is, an FPT approximation scheme.  More generally, the goal of this paper
is to provide, in a \emph{uniform} way, similar approximation (or bicriteria
approximation) schemes for a diverse set of W-hard and APX-hard graph problems.
The problems for which we will provide algorithms are the following: \MC, \EDS,
\BDD, \CDS, \CVC, \EQC\ and \MMO. For most of these problems we are able to
provide equally efficient algorithms for both treewidth and clique-width (a
detailed description of all results is given further below).

\subsubsection*{Paper overview:} In this paper we adopt a generic technique
that is a variation of the standard dynamic programming used for treewidth and
clique-width.  We observe that for a number of problems which are parameterized
intractable for these widths, the hardness intuitively stems from the fact that
large \emph{integers} need to be stored in the dynamic programming table. These
integers are usually calculated simply by \emph{adding} previously calculated
entries (all the problems listed above fall into this general category, though
for some this is not obvious).  We want to shrink the table, and thus speed up
the algorithms, by storing these integers approximately. 

The basic idea we use is very natural. We fix a parameter $\delta>0$ and
represent all integers in $\{1,\ldots,n\}$ by rounding them to the closest
integer power of $(1+\delta)$. If $\delta$ is not too small
($\delta=\Omega(\frac{1}{\log^c n})$) the natural dynamic programming table's
size is dramatically reduced from $n^w$ to $(\log n )^{O(w)}$. The obvious
obstacle to this approach, however, is that during the process of running a
dynamic programming algorithm on the approximate values the rounding errors
will propagate and potentially pile up to a large error. How can we keep the
errors under control?

In this paper we suggest a very natural randomized rounding approach to this
problem. The main contribution is to show that this rounding idea can be
seamlessly incorporated into the standard dynamic programming techniques of
treewidth and clique-width to give efficient approximation schemes. In order to
make the transition from exact to approximate algorithms as cleanly as possible
we separate the analysis into two parts.  First, we introduce an abstract model
of computation, called Approximate Addition Trees, which captures the essence
of the rounding ideas we described.  We fully analyze the approximation
performance of these Trees and prove some general approximation theorems. Then,
relying on this analysis we give a series of approximation algorithms using
clique-width and treewidth.  It's worth stressing at this point that all the
algorithms we will present follow the standard dynamic programming mold that
should be very familiar to readers accustomed to graph widths. The important
difference is that their analysis, in addition to standard methods, also
crucially relies on our results on Approximate Addition Trees (which we can use
as a black box).  Thus, by abstracting away the Addition Trees, our technique
can be viewed as a natural extension of well-known ideas.  The hope is that
this modularization will allow our technique to be easily reused and eventually
become part of the standard graph width toolkit.

Thus, what is left to describe is the workings of Approximate Addition Trees.
This is, expectedly, the most technical part of the paper. As we will see,
there do exist some important special cases where a complicated analysis can be
avoided (notably, when the input tree is balanced) and there is some value in
these cases since they can help make some algorithms deterministic.  However,
in order to obtain the more interesting results of this paper we need an
analysis of full generality.  In other words, we need to establish an
approximation theorem that works for all Addition Trees without making any
special assumptions about their structure.  Our main technical contribution is
that we do establish such a result and this allows us to analyze all the
algorithms of this paper in terms of Addition Trees. We thus present a robust,
unified technique that works for both treewidth and clique-width (and
potentially other similar graph widths), without relying on any non-trivial
width-specific properties.

\subsubsection*{Summary of results:} Let us now formally state the algorithmic
results presented in this paper. Full problem definitions are given further
below.

In the following theorems, $n^{O(1)}$ factors are omitted from the running
times.

\begin{theorem} \label{thm:cw-main}

Given an $n$-vertex graph $G(V,E)$, a clique-width expression with $w$ labels,
and an error parameter $\eps>0$, there exist randomized algorithms which, with
high probability, achieve the following:

\begin{itemize}

\item Produce an approximate solution to \MC\ with size at least
$\frac{\OPT}{1+\eps}$ in time $(\log n/\eps)^{O(w)}$.

\item Produce an approximate solution to \EDS\ with size at most $(1+\eps)\OPT$
in time $(\log n/\eps)^{O(w)}$.

\item Given an integer $k$, either decide (correctly) that $G$ does not admit
an \EQC\ with $k$ colors or produce a valid $k$-coloring where the ratio of the
sizes of any two color classes is at most $(1+\eps)$ in time $(\log
n/\eps)^{O(k)} 2^{kw} $.

\item Given an integer $\Delta$ find a set of vertices that is at most as large
as the optimal solution for \BDD\ to degree $\Delta$ and whose deletion makes
the maximum degree at most $(1+\eps)\Delta$, in time $(\log n/\eps)^{O(w)}$.

\item Given a capacity for each vertex, find a \CDS\ of size at most $\OPT$,
such that all but at most $\eps n$ vertices are dominated, in time $(\log
n/\eps)^{O(w)}$.

\end{itemize}

In addition, if instead of a clique-width expression we are given a tree
decomposition of width $w$, there exist deterministic algorithms, with the same
running times, achieving all the above. 

\end{theorem}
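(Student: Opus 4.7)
The plan is uniform: take the standard clique-width dynamic program for each of the five problems and modify it by storing every large integer in the DP table as an approximate value, namely the closest power of $(1+\delta)$ for a suitably chosen $\delta=\Theta(\eps/\log n)$. For \MC, for instance, the exact DP state at each expression node is a tuple $(c_{i,b})$ for $i\le w$ and $b\in\{0,1\}$ recording how many vertices of label $i$ lie on each side of the cut, together with the partial cut size; since each count is in $\{0,\ldots,n\}$, we obtain the standard $n^{O(w)}$ bottleneck. Rounding all of these to the nearest power of $(1+\delta)$ collapses each range to $O((\log n)/\eps)$ distinct values, yielding a table of size $(\log n/\eps)^{O(w)}$ as claimed. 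Analogous reformulations apply to the other items: for \EDS\ the partial solution size is rounded; for \EQC\ we additionally branch over which of the $k$ colors each of the $w$ labels may represent (this yields the extra $2^{kw}$ factor) and round the color-class sizes; for \BDD\ we round per-vertex counts of retained neighbors; for \CDS\ we round both the remaining capacity at each vertex and the count of still-undominated vertices.

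The central difficulty is that the clique-width combine operations, especially disjoint union, compute new table entries by \emph{summing} entries of child subtables, and such rounded sums can in principle amplify errors rapidly along a deep expression. This is precisely what the Approximate Addition Trees framework developed earlier in the paper is designed to control, and its main theorem bounds the multiplicative error of any randomized-rounded sum tree by $1+O(\delta\log n)$ with high probability, irrespective of the tree's depth or balance. Applying that theorem as a black box to each DP entry, together with a union bound over all $(\log n/\eps)^{O(w)}$ entries, the total multiplicative error stays within $1+\eps$ with probability at least $1-n^{-\Omega(1)}$. For \MC\ and \EDS\ this is directly the stated approximation ratio; for \EQC\ and \BDD\ it yields the promised $(1+\eps)$ ratio between color classes and the $(1+\eps)\Delta$ degree overshoot; and for \CDS\ it translates to at most $\eps n$ vertices left undominated since the rounding error on the count of dominated vertices is $\eps$ times the true value.

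The main obstacle that remains is obtaining the deterministic variants in the treewidth setting. Here I would invoke the standard fact that any tree decomposition can be balanced to logarithmic depth without affecting its width, and then appeal to the special-case (balanced-tree) result of the earlier section, which gives a \emph{deterministic} rounding scheme with the same $1+\eps$ error bound. Feeding the balanced decomposition into this deterministic rounding recovers all the running times without randomization. The remaining verifications are routine: for each problem one checks that the combine rule of the DP genuinely corresponds to summation (not, say, maximization over an unbounded domain), and that the final rounded optimum is indeed within the promised factor of the true optimum.
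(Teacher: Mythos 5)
Your overall route is the paper's: replace the exact clique-width DP values by powers of $(1+\delta)$, use the randomized Approximate Addition Trees theorem as a black box for general (unbalanced) clique-width expressions, and obtain the deterministic treewidth variants by balancing the decomposition and invoking the easy depth-based bound. However, two points are genuine gaps. First, you misquote the black box: you set $\delta=\Theta(\eps/\log n)$ and claim error $1+O(\delta\log n)$ with high probability ``irrespective of the tree's depth or balance.'' That guarantee is false for this rounding process: the unbiased rounding makes the error (measured in $\log_{(1+\delta)}$ units) a self-correcting random walk whose typical fluctuation is of order $1/\sqrt{\delta}$ rounding steps, so the multiplicative error is roughly $e^{\Theta(\sqrt{\delta}\cdot\mathrm{polylog}\, n)}$, and one must take $\delta=O(\eps^2/\mathrm{polylog}\, n)$ (the paper requires $\delta<\eps^2/(C\log^6 n)$ in Theorem \ref{thm:trees}); the bound of the form $1+O(\delta h)$ holds only deterministically for depth-$h$ (balanced) trees, as in Theorem \ref{thm:easy}. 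The fix is harmless for the running time, since the number of rounded values stays $\mathrm{poly}(\log n/\eps)$, but as written your error analysis for the clique-width case does not go through.

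Second, you gloss over the requirement that the dynamic programs be purely \emph{additive}, which is exactly where the problem-specific work lies. Under clique-width you cannot keep per-vertex state, so ``remaining capacity at each vertex'' (\CDS) and ``per-vertex counts of retained neighbors'' (\BDD) are not available at expression nodes (the unit of bookkeeping is a label class, not a vertex); moreover ``remaining'' capacity is a subtraction of rounded quantities, which destroys any multiplicative guarantee. The paper instead stores per-label aggregates and deliberately avoids subtraction: for \MC\ both sides of the cut per label (as you do), for \CDS\ the total and the used capacity of each label set kept separately (whence the repair step leaving at most $\eps n$ vertices undominated), for \BDD\ the number of active vertices and the maximum active degree per label, and for \EDS\ it first reformulates the problem as finding a minimum vertex cover $S$ such that $G[S]$ has a perfect matching, precisely so that the resulting DP uses only additions; ``round the partial solution size'' for \EDS\ does not by itself give an additive DP. These subtraction-free reformulations are missing ideas in your sketch. (Your \EQC\ bookkeeping, color subsets per label plus rounded global class sizes, is fine and matches the stated $(\log n/\eps)^{O(k)}2^{kw}$ bound, and the treewidth balancing step increases the width only by a constant factor, which is harmless.)
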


Furthermore, we also have the following results for treewidth.

\begin{theorem} \label{thm:tw-main}

Given an $n$-vertex graph $G(V,E)$, a tree decomposition of width $w$, and an
error parameter $\eps>0$, there exist deterministic algorithms which achieve
the following: 

\begin{itemize}

\item Produce an approximate solution with cost at most $(1+\eps)\OPT$ for
\MMO, in time $(\log n)^{O(w)}$.

\item Given a capacity for each vertex, find a \CDS\ (or \CVC) of size at most
$\OPT$, such that no capacity is violated by a factor of $(1+\eps)$ or more, in
time $(\log n)^{O(w)}$.

\end{itemize}

\end{theorem}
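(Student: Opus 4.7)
The plan is to combine the standard tree-decomposition DP for each problem with the rounding scheme formalized via Approximate Addition Trees, exploiting a feature of treewidth that clique-width lacks: any tree decomposition can be rebalanced (Bodlaender-style) to depth $O(\log n)$ while increasing the width by only a constant factor. Running the DP on such a balanced decomposition lets me use the deterministic ``balanced tree'' special case of the Approximate Addition Tree analysis alluded to earlier in the paper, avoiding the randomization required in the general setting of Theorem~\ref{thm:cw-main}. Concretely, I would set the rounding parameter $\delta = \Theta(\eps / \log n)$ so that the multiplicative error accumulated over $O(\log n)$ additions stays bounded by $(1+\delta)^{O(\log n)} \le 1+\eps$, giving a deterministic scheme with table size $(\log n)^{O(w)}$ per bag.

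For \MMO\ the DP at each bag records, for every partial assignment of the edges (jobs) processed so far in the subtree, the induced vector of machine loads on the bag's machine vertices. Loads are integers bounded polynomially in $n$, they accumulate purely by addition as the DP moves up the decomposition, and the makespan objective is simply the maximum of these loads. Rounding each load to the nearest integer power of $(1+\delta)$ collapses the load alphabet to $O(\log n / \eps)$ symbols; the Approximate Addition Tree bound, applied to each machine's load separately, certifies that the rounded solution has true makespan at most $(1+\eps)\OPT$.

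For \CDS\ and \CVC\ the DP at a bag records, for each bag vertex, the count of vertices it has already been assigned to dominate, respectively the count of incident edges it has been assigned to cover. These counts again accumulate by addition as subtrees are merged, so the Addition-Tree template applies verbatim. I would round them as above and enforce feasibility against slightly inflated thresholds: the true capacity $c_v$ is compared with rounded counts whose underlying exact value exceeds $c_v$ by a factor of at most $(1+\eps)$. Any optimum-size solution of the original instance is feasible after this inflation, so the DP returns a solution of size at most $\OPT$; conversely, every capacity on the returned solution is violated by a factor of at most $(1+\eps)$.

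The main obstacle, as always for this kind of rounding, is to show that some near-optimal solution of the original instance survives through the rounded DP, i.e.\ that at every bag there is a rounded table entry corresponding to (a rounding of) the optimum restricted to that subtree. This is essentially what the Approximate Addition Tree theorem provides once the combining depth is logarithmic, so the argument reduces to verifying that the per-problem DP recurrences fit the addition-tree template and choosing $\delta$ to absorb the $O(\log n)$ cascaded rounding slack into a single $(1+\eps)$ factor. After that black-box invocation, the remaining work is the routine bookkeeping of translating rounded loads and counts back to true values and checking that the stated bicriteria guarantees follow.
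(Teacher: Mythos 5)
Your proposal is correct and follows essentially the same route as the paper: rebalance the tree decomposition to depth $O(\log n)$ at a constant-factor cost in width, take $\delta=\Theta(\eps/\log n)$ so that the deterministic balanced-height bound (Theorem~\ref{thm:easy}) replaces the randomized analysis, and run the rounded, addition-only DPs that track per-bag-vertex accumulated out-degrees for \MMO\ and used-capacity counts (with slightly inflated thresholds) for the capacitated problems. The only cosmetic deviation is \CVC: the paper reduces it to a vertex-weighted \CDS\ instance (subdividing edges and adding a universal vertex), whereas you run a direct DP on rounded covered-edge counts per selected bag vertex; both work, with yours avoiding the reduction and the paper's reusing the \CDS\ machinery unchanged.
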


The algorithms achieving the results of Theorems \ref{thm:cw-main} and
\ref{thm:tw-main} are given in Section \ref{sec:algs}.

\subsubsection*{Previous work:}

Let us first review previous work for the specific problems we will focus on.
In \MC\ we want to partition the vertices of a graph into two sets so that the
number of edges with one endpoint in each set is maximized. \MC\ was shown to
be W-hard when parameterized by clique-width in \cite{FominGLS10}. The problem
is known to be APX-hard in general \cite{papadimitriou1991optimization}.  In
\EDS\ we want to select the smallest possible set of edges such that all edges
share an endpoint with a selected edge. This problem is also APX-hard and
W-hard for clique-width \cite{FominGLS10}. Both problems are FPT parameterized
by treewidth. Let us remark that, in addition to these two problems, very few
other problems are known to be W-hard for clique-width but FPT for treewidth.
The set of problems that behave this way are sometimes considered part of ``the
price of generality'' that clique-width affords, compared to treewidth.
Investigating this price has been an interesting research topic in
parameterized complexity.  One interpretation of the results of this paper is
therefore that this ``price'' is not as high as previously believed, since two
of the most prominent problems from this family can be well-approximated for
clique-width.

In \EQC\ we want to find a proper $k$-coloring of a graph so that all color
classes have the same size.  \EQC\ is known to be W-hard by the results of
Fellows et al.  \cite{fellows2011complexity} even when parameterized by both
the treewidth of the input graph $w$ and the number of colors $k$. An exact XP
algorithm parameterized by $w+k$ can be easily obtained with standard
techniques, but a more general XP algorithm parameterized by $w$ only is shown
in \cite{bodlaender2005equitable}. The problem is easily shown to generalize
\textsc{Graph Coloring} (if we add a sufficiently large independent set to a
graph $G$ then $G$ admits an equitable coloring with $k$ colors if and only if
its chromatic number is at most $k$).

In \MMO\ we are given an edge-weighted graph and need to find an orientation of
the edges so that the maximum weighted out-degree of any vertex is minimized.
\MMO\ is sometimes also called \textsc{Minimum Maximum Outdegree} in the
literature. One motivation for the study of this problem is that it is a
special case of min-makespan scheduling (if vertices represent machines and
edges represent jobs that can be processed only by their endpoints). \MMO\ was
shown to be W[1]-hard parameterized by treewidth by Szeider \cite{szeider11}.
In \cite{AsahiroMO08} it was shown to be in P when all edge weights are equal,
while in \cite{Szeider11MFCS} an XP algorithm was shown when the problem is
parameterized by treewidth.  Regarding polynomial-time approximations, in
\cite{EbenlendrKS08} the problem was shown to be hard to approximate with a
ratio better than 1.5 even if all weights are 1 or 2. In the same paper a
1.75-approximation was given.  

In \BDD\ we want to delete as few vertices as possible to make the maximum
degree of a graph $\Delta$.  \BDD\ was shown to be W[1]-hard parameterized by
treewidth by Betzler et al.~\cite{betzler2012bounded}. The problem generalizes
\textsc{Vertex Cover} and is therefore APX-hard to solve for general graphs.
In \CDS\ (\CVC) we are given a graph where each vertex $v$ has a capacity
$c(v)$ and we want to find a dominating set (resp.~vertex cover) such that each
selected vertex $v$ is used to dominate at most $c(v)$ other vertices
(resp.~cover $c(v)$ edges).  \CDS\ and \CVC\ were shown W[1]-hard when
parameterized by the size of a minimum feedback vertex set (and therefore, also
when parameterized by treewidth) in \cite{DomLSV08}. Both problems are at least
APX-hard to approximate in polynomial time, since they generalize
\textsc{Dominating Set} and \textsc{Vertex Cover}. Since the algorithms we give
approximate the capacity (or degree) constraints they can be described as
$(1,1+\eps)$-bicriteria approximations for these problems.

Let us also recall some more general related work. Very few FPT approximation
schemes are currently known.  For an overview of the most important results see
the survey by Marx \cite{Marx08}.  The same paper gives an FPT approximation
scheme for \textsc{Max Vertex Cover} parameterized by the size of the cover.
This is extended in \cite{SkowronF13} to an FPT approximation scheme for
\textsc{Max Cover}. See also \cite{BonnetP13a} for FPT approximation schemes
for related covering problems.  \textsc{Sum Edge Multicoloring} is a rare
example of a problem currently known to admit an FPT approximation scheme
parameterized by treewidth \cite{Marx04}.

Let us also mention that the notion of FPT approximation also makes sense when
one is trying to obtain constant factor approximations (instead of
approximation schemes) (see e.g. \cite{FratiGGM13}). It's also interesting to
approximate problems which are FPT, if the approximation algorithm can run in
significantly improved time (see \cite{BrankovicF10} or \cite{Bodlaender13}).
The complexity of parameterized approximations for naturally parameterized
problems (that is, parameterized by the size of the solution) has also been
considered.  Unfortunately, the evidence so far seems to suggest that standard
problems, such as \textsc{Clique} and \textsc{Dominating Set} are hard to
approximate even in a parameterized setting
\cite{Escoffier12},\cite{ChitnisHK13}.

In this paper we focus on problems parameterized by treewidth or clique-width.
For an introduction to these notions see
\cite{bodlaender2008combinatorial,courcelle2000linear,EspelageGW01}.  It was
initially believed that problems solvable on trees are almost always FPT
parameterized by treewidth. Gradually, many exceptions were discovered.  This
includes problems such as \textsc{Edge-disjoint Paths} and
\textsc{L(2,1)-coloring}, which are NP-hard for graphs of treewidth 2
\cite{N01,FialaGK05} and \textsc{Steiner Forest} which is NP-hard for graphs of
treewidth 3 (\cite{BateniHM10} gives a PTAS for this problem using treewidth).
More relevant to our purposes are problems which are solvable in polynomial
time for constant treewidth, but not FPT. Some examples of such problems when
parameterized by treewidth (in addition to the problems we consider in this
paper) are the following: \textsc{Target Set Selection} \cite{Ben-ZwiHLN11},
\textsc{Maximum Path Coloring} \cite{Lampis11}, \textsc{List Hamiltonian Cycle}
\cite{Meeks11}, \textsc{List Coloring} (even if parameterized by vertex cover)
\cite{fellows2011complexity}, \textsc{General Factor} \cite{SamerS08},
\textsc{Generalized Satisfiability} parameterized by the treewidth of the dual
or incidence graph \cite{SamerS10}, \textsc{Generalized Domination}
\cite{Chapelle10}, \textsc{Bounded Edge-Disjoint Paths} \cite{GolovachT11}.
All problems which are W-hard for treewidth are of course also W-hard for the
more general clique-width. Additionally, in \cite{FominGLS09} it is shown that
\textsc{Graph Coloring, Hamiltonicity,} \MC\ and \EDS\ are W-hard parameterized
by clique-width.

\section{Definitions and Preliminaries}

We assume that we are using the real-word RAM model. We use $\log(n)$ to denote
the base-2 logarithm of $n$ and $\ln(n)$ to denote the natural logarithm. In
addition $\log_{(1+\delta)}(n)$ is the logarithm base-$(1+\delta)$, for
$\delta>0$. We have $\log_{(1+\delta)}(n) = \ln(n)/\ln(1+\delta)$. The function
$\lfloor x \rfloor$, for $x\in\mathbb{R}$ denotes the maximum integer that is
not larger than $x$. 

We use boldface to denote vectors, for example $\vec{d}$. Sometimes a vector
$\vec{s}\in S^k$ for $S$ a set and $k\in\mathbb{N}$ will also be viewed as a
function from $\{1,\ldots,k\}$ (or some other convenient set of size $k$) to
$S$, and vice-versa. For a function $f:S_1\to S_2$ we use $f^{-1}(u), u\in S_2$
to denote the set $\{ v\in S_1\ |\ f(v)=u \}$.

We will also use the following standard facts.

\begin{lemma} \label{lem:facts}

Let $x,\delta\in\mathbb{R}$. Then the following hold:

\begin{enumerate}

\item \label{item:a} $1+x \le e^x$

\item \label{item:b} If $x\in (0,\frac{1}{2})$ then $e^{x/2} \le 1+x$ 

\item \label{item:c} If $x\in (-\frac{1}{2},\frac{1}{2})$ then $e^{x} \le
1+x+x^2$

\item \label{item:d} If $x \in (-\frac{1}{2},\frac{1}{2})$ then
$|\ln (1+x)| \ge \frac{|x|}{2}$

\item \label{item:e} If $\delta \in (0,\frac{1}{2})$ and $\delta |x| \le
\frac{1}{2}$ then $|(1+\delta)^x-1| \ge \frac{1}{4} \delta |x|$

\end{enumerate}

\end{lemma}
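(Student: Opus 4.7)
All five inequalities are standard convexity/monotonicity statements about the exponential and logarithm, so my plan is to derive the first three by elementary calculus on auxiliary functions, and then get items \ref{item:d} and \ref{item:e} as corollaries by combining them.

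For item \ref{item:a}, I would set $f(x) = e^x - (1+x)$, compute $f'(x) = e^x - 1$, note that $f'$ vanishes only at $0$ and changes sign from negative to positive there, so $f$ attains its global minimum $f(0) = 0$ at the origin. For item \ref{item:b}, I would apply the same recipe to $g(x) = (1+x) - e^{x/2}$ on $(0,1/2)$: one has $g(0) = 0$ and $g'(x) = 1 - \tfrac{1}{2} e^{x/2} > 0$ on the whole interval since $e^{x/2} < e^{1/4} < 2$ there. Item \ref{item:c} is the same trick applied to $h(x) = 1+x+x^2 - e^x$: $h(0) = h'(0) = 0$, and $h''(x) = 2 - e^x > 0$ on $(-1/2,1/2)$, so $h'$ is strictly increasing with unique zero at $0$, making $0$ the global minimum of $h$ on the interval.

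For item \ref{item:d}, I would split on the sign of $x$. For $x \in (0,1/2)$, items \ref{item:a} and \ref{item:b} already give $1 + x/2 \le e^{x/2} \le 1+x \le e^x$, and taking logarithms of the middle inequality yields $\ln(1+x) \ge x/2 = |x|/2$. For $x \in (-1/2, 0)$, it turns out one gets the stronger bound $|\ln(1+x)| \ge |x|$, which I would obtain by considering $\phi(x) = x/2 - \ln(1+x)$, verifying $\phi(0)=0$ and $\phi'(x) = 1/2 - 1/(1+x) < 0$ for $x < 1$, so $\phi$ is positive on $(-1/2,0)$, which gives $\ln(1+x) < x/2 < 0$.

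Item \ref{item:e} is the only composite statement and is where I expect the bookkeeping to be mildly delicate. The plan is to write $(1+\delta)^x = e^{y}$ for $y := x \ln(1+\delta)$, and show two things: first, that $|y|$ is bounded above by $1/2$ so that we remain inside the range of items \ref{item:b}--\ref{item:d}; second, that $|y|$ is bounded below by $\delta |x|/2$. The upper bound follows from item \ref{item:a} in the form $\ln(1+\delta) \le \delta$ combined with the hypothesis $\delta |x| \le 1/2$. The lower bound is exactly item \ref{item:d} applied to $\delta \in (0,1/2)$: $|\ln(1+\delta)| \ge \delta/2$. It then remains to show $|e^y - 1| \ge |y|/2$ for $|y| \le 1/2$; for $y > 0$ this is immediate from item \ref{item:a}, and for $y < 0$ I would write $y = -t$ with $t \in (0,1/2)$ and observe that $\psi(t) := 1 - e^{-t} - t/2$ satisfies $\psi(0) = 0$ and $\psi'(t) = e^{-t} - 1/2 > 0$ on $(0, \ln 2) \supset (0,1/2)$, so $1 - e^{-t} \ge t/2$. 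Chaining the two bounds gives $|(1+\delta)^x - 1| = |e^y - 1| \ge |y|/2 \ge \delta|x|/4$, as required.
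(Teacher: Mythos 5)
Your proof is correct, and its overall architecture (items \ref{item:a}--\ref{item:c} first, then \ref{item:d} and \ref{item:e} as consequences) matches the paper's, but the technique differs in the details. For items \ref{item:b} and \ref{item:c} the paper truncates and bounds the Taylor series of $e^{x/2}$ and $e^x$, while you run first/second-derivative monotonicity arguments on $g(x)=(1+x)-e^{x/2}$ and $h(x)=1+x+x^2-e^x$; both are equally elementary, and your $h''(x)=2-e^x>0$ argument is arguably cleaner than the paper's series manipulation. For item \ref{item:e} the paper splits on the sign of $x$ and, in the negative case, rewrites $1-(1+\delta)^x=\frac{(1+\delta)^{|x|}-1}{(1+\delta)^{|x|}}$ and bounds the denominator by $\sqrt{e}$ using $\delta|x|\le\frac12$; you instead substitute $y=x\ln(1+\delta)$, verify $|y|\le\frac12$ and $|y|\ge\frac{\delta|x|}{2}$, and reduce everything to the single clean inequality $|e^y-1|\ge\frac{|y|}{2}$ on $|y|\le\frac12$, proved again by a derivative test (using $\frac12<\ln 2$). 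This modularization isolates the only place the constant $\frac14$ is lost and is a perfectly valid alternative; both routes give the same constant. One small blemish: in the negative branch of item \ref{item:d} you announce the ``stronger bound $|\ln(1+x)|\ge|x|$'', but the function you actually analyze, $\phi(x)=\frac{x}{2}-\ln(1+x)$, only delivers $|\ln(1+x)|\ge\frac{|x|}{2}$ (which is all the lemma needs, so nothing breaks); if you want the stronger bound, as the paper obtains via the series $\ln(1-|x|)=-\sum_{i\ge1}\frac{|x|^i}{i}$, you should instead take $\phi(x)=x-\ln(1+x)$, i.e.\ just the logarithmic form of item \ref{item:a}.
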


\begin{proof}

For item \ref{item:a} one can consider the function $f(x)=e^x-x-1$. This
function has a global minimum at $x=0$ (this can be established by looking at
its derivative), thus $f(x)\ge f(0)=0$. For item \ref{item:b} we can use the
Taylor expansion $e^{x/2}=\sum_{i=0}^{\infty} \frac{(x/2)^i}{i!} \le 1 +
\frac{x}{2} + \sum_{i=2}^\infty \frac{x^i}{8} \le 1 + \frac{x}{2} +
\frac{x^2}{4} \le 1+x$, where we have used the fact that $x\le \frac{1}{2}$.

For item \ref{item:c} we again use the Taylor expansion $e^x = 1 + x +
\sum_{i=2}^\infty \frac{x^i}{i!} \le 1 + x + \sum_{i=2}^\infty \frac{|x|^i}{2}
\le 1 + x + x^2$, where in the last inequality we used the fact that $|x|\le
\frac{1}{2}$.

For item \ref{item:d} if $x\ge 0$ this follows from item \ref{item:b} by taking
the natural logarithm of both sides. For $x\le 0$ we can use the fact that
since $|x| < 1$ we have $\ln(1-|x|) = - \sum_{i=1}^{\infty} \frac{|x|^i}{i}$,
therefore $|\ln(1+x)| = |\ln(1-|x|)| = \sum_{i=1}^{\infty} \frac{|x|^i}{i} \ge
|x|$.

For item \ref{item:e} first consider the case $x\ge 0$. We have $\ln(1+\delta)
\ge \frac{\delta}{2}$ (by item \ref{item:b}) so since $x\ge 0$ we have
$x\ln(1+\delta) \ge \frac{\delta x}{2} \Rightarrow (1+\delta)^x \ge
e^{\frac{\delta x}{2}} \ge 1+\frac{\delta x}{2}$, where we used item
\ref{item:a}. Thus, $(1+\delta)^x-1 \ge \frac{\delta x}{2} \Rightarrow
|(1+\delta)^x-1| \ge \frac{\delta |x|}{2}$, since $x\ge 0$ and the item is
proved in this case. For $x<0$ we have $|(1+\delta)^x-1| = 1 - (1+\delta)^x =
\frac{(1+\delta)^{|x|}-1}{(1+\delta)^{|x|}}$. Using our calculations from the
$x\ge 0$ case we have $(1+\delta)^{|x|}-1 \ge \frac{\delta|x|}{2}$ so we get
$|(1+\delta)^x-1| \ge \frac{\delta|x|}{2(1+\delta)^{|x|}} \ge
\frac{\delta|x|}{2e^{\delta |x|}} \ge \frac{\delta |x|}{2 \sqrt{e}} \ge
\frac{\delta|x|}{4}$ where we have used that $1+\delta \le e^\delta$ (item
\ref{item:a}) and $\delta|x|\le \frac{1}{2}$. \qed

\end{proof}

We assume the reader has some familiarity with standard definitions from
parameterized complexity, such as the classes FPT and XP (see
\cite{flum2006parameterized}).  For a parameterized problem with input size $n$
and parameter $k$ an FPT Approximation Scheme (FPT-AS) is an algorithm which,
given an error parameter $\eps>0$ runs in time $f(k,\frac{1}{\eps})$ (that is,
FPT time when parameterized by $k+\frac{1}{\eps}$) and produces a solution that
is at most a multiplicative factor $(1+\eps)$ from the optimal (see
\cite{Marx08}).  The problems we consider in this paper are parameterized by
some graph width. We design (randomized) algorithms running in time
$\left(\frac{\log n}{\eps}\right)^{O(k)} n^{O(1)}$. By standard facts in
parameterized complexity theory such running times imply FPT algorithms.

\subsubsection*{Graph widths}

We use standard graph theoretic notation. For an undirected graph $G(V,E)$ and
$S\subseteq V$ we denote by $G[X]$ the graph induced by $X$. We will use the
standard notion of tree decomposition (for an introduction to this notion see
the survey by Bodlaender and Koster \cite{bodlaender2008combinatorial}).  Given
a graph $G(V,E)$ a tree decomposition of $G$ is a tree $T(I,F)$ such that every
node $i\in I$ has associated with it a set $X_i\subseteq V$, called the bag of
$i$.  In addition, the following are satisfied: $\bigcup_{i\in I} X_i = V$; for
all $(u,v)\in E$ there exists $i\in I$ such that $\{u,v\}\subseteq X_i$; and
finally for all $u\in V$ the set $\{ i\in I\ |\ u\in X_i\}$ is a connected
sub-tree of $T$. The width of a tree decomposition is defined as $\max_{i\in I}
|X_i| - 1$. The treewidth of a graph $G$ is the minimum treewidth of a tree
decomposition of $G$.

As is standard, when dealing with problems on graphs of bounded treewidth we
will assume that a ``nice'' tree decomposition of the input graph is supplied
with the input. In a nice tree decomposition the tree $T$ is a rooted binary
tree and each node $i$ of the tree is of one four special types (see
\cite{bodlaender2008combinatorial} for a definition).

We will also use the notion of clique-width (see
\cite{courcelle2000linear,EspelageGW01}).  The set of graphs of clique-width
$w$ is the set of vertex-labelled graphs which can be constructed inductively
using the following operations:

\begin{enumerate}

\item Introduce: $i(l)$, for $l\in\{1,\ldots,w\}$  is the graph consisting of a
single vertex with label $l$.

\item Union: $\cup(G_1,G_2)$, for $G_1, G_2$ having clique-width $w$ is the
disjoint union of these two graphs.

\item Join: $\sigma(G,l_1,l_2)$, for $G$ having clique-width $w$ and
$l_1,l_2\in\{1,\ldots,w\}$ is the graph obtained from $G$ by adding all
possible edges from vertices with label $l_1$ to vertices with label $l_2$.

\item Rename: $\rho(G,l_1,l_2)$, for $G$ having clique-width $w$ and
$l_1,l_2\in\{1,\ldots,w\}$ is the graph obtained from $G$ by changing the label
of all vertices having label $l_1$ to $l_2$.

\end{enumerate} 

A clique-width expression of width $w$ for $G(V,E)$ is a recipe for
constructing a $w$-labelled graph isomorphic to $G$. More formally, a
clique-width expression is a rooted binary tree such that each node has one of
four possible types, corresponding to  the operations described above. In
addition, all leaves are Introduce nodes, each Introduce node has a label
$\in\{1,\ldots,w\}$ associated with it, and each Join or Rename node has two
labels in $\{1,\ldots,w\}$ associated with it. For each node $i$ the graph
$G_i$ is defined as the graph obtained by applying the operation of node $i$ to
the graph (or graphs) associated with its child (or children). All graphs $G_i$
are subgraphs of $G$ and for all leaves with label $l$ we define their
associated graph to be $\eta(l)$.

Again, as is customary, when dealing with a problem parameterized by
clique-width we will assume that a clique-width expression of the input graph
is supplied with the input. We can also assume without loss of generality that
the given clique-width expression has some nice properties. In particular,
whenever the operation $\sigma(G_i,l_1,l_2)$ is used we can assume that there
are no edges between vertices with labels $l_1,l_2$, since otherwise we can
edit the clique-width expression up to this point to remove the Join operations
that produced such edges and this does not affect the final graph.

\section{Approximate Addition Trees} \label{sec:trees}

In this section we describe an abstract model of computation which one may
naturally call Addition Trees. In such a Tree each node calculates a value that
is the sum of the values of its children. We also define an Approximate version
of these trees, where each node \emph{probabilistically} rounds calculated
values to integer powers of $(1+\delta)$, for some parameter $\delta>0$. These
trees capture the rounding scheme that will be the heart of the algorithms of
the next section.  Our goal is to prove that the values of Approximate and
Exact Addition Trees are almost always very close, even if $\delta$ is not too
small (we want $\delta=\Omega(1/\log^cn)$). We require $\delta$ to be in this
range, because in the end the algorithms of the next section will run in time
roughly $(\log n/\delta)^w$. Thus, if we allow $\delta$ to become polynomial in
$n$ (which would make this section easy), we will get algorithms as slow as the
trivial exact ones.

Intuitively, there are two extreme cases to consider here. First, if a tree is
balanced (that is, it has logarithmic height), it is not hard to establish that
rounding errors cannot pile up too badly (Theorem \ref{thm:easy}). Somewhat
surprisingly, this easy case is already sufficient to obtain several
non-trivial algorithmic results, because tree decompositions can always be
balanced reasonably well (more details are given in the next section). However,
to obtain the more interesting results of this paper we need to deal with
clique-width, where the input decomposition cannot in general be balanced.
Therefore, we have to deal with general Addition Trees.

Our proof strategy then is to move on to a second extreme case: caterpillars.
Here the height of the tree is large, but we know that one operand of each
addition has no previously accumulated error. Despite this, this is actually a
pretty hard case. To see why, intuitively one can think of the accumulated
error at each level of the tree as a random variable, since the rounding
performed on each step is randomized. The error has some probability of
increasing and some of decreasing, depending on how we round,  but it changes
by at most a factor of $(1+\delta)$ in each step. So, if we look at its
logarithm (with base $(1+\delta)$) it can (randomly) increase or decrease by at
most 1.  Thus, the process we are trying to analyze is akin to a memoryless
random walk on the real line.  We want to prove that the end result of the walk
after $n$ steps is with high probability contained in an area of size only
roughly $1/\delta = \mathrm{poly}(\log n)$.  Such a statement would be,
however,  false if the walk was completely unbiased, so we cannot rely on
standard tools, such as Chernoff bounds or the Azuma inequality, because the
result they give is too weak (they give concentration in an area of size
roughly $\sqrt{n}$).  Instead, we need to use moment-generating functions to
derive a problem-specific concentration bound that takes into account our
algorithm's tendency to ``self-correct''. Because of this special tendency
(Lemma \ref{lem:self-correct}), our random walk is much more strongly
concentrated around its expectation than usual random walks.

Thus, eventually we establish (in Lemma \ref{lem:paths}) that the approximation
error is small in the caterpillar case, with high probability.  Once this has
been shown we can extend the same ideas to prove a sufficiently good
approximation theorem for general trees by performing induction on the
``balanced height'' of the tree (Theorem \ref{thm:trees}).  Roughly speaking,
the idea is that in any node of an arbitrary tree either both children have
roughly the same accumulated error (in which case the node is balanced) or one
has potentially much higher error (in which case the proof is similar to that
for caterpillars). So, combining the ideas of the two cases we can handle
arbitrary trees.

We remark that the only parts of this section needed to follow the results of
the next one are Definitions \ref{def:AT},\ref{def:AAT} and Theorems
\ref{thm:easy},\ref{thm:trees}. Let us now proceed to give full details.

For the following definition recall that a rooted binary tree is full if all
non-leaf nodes have exactly two children.

\begin{definition} \label{def:AT}

An Addition Tree is a full rooted binary tree $T$ where we associate with each
leaf $l$ a non-negative integer \emph{input} $x_l$ and with each node $v$ a
non-negative integer \emph{value} $y_v$.  The inputs are given with $T$. The
value of each node is calculated as follows: 

\begin{enumerate} 

\item For each leaf $l$ we set $y_l:=x_l$

\item For each internal node $v$ with two children $u_1, u_2$ whose values have
already been calculated we set $y_v:= y_{u_1} + y_{u_2}$.

\end{enumerate}

\end{definition}

Let us also define an approximate version of this model.

\begin{definition} \label{def:AAT}

An Approximate Addition Tree with parameter $\delta$ is a full rooted binary
tree $T$ where we associate with each leaf $l$ a non-negative integer
\emph{input} $x_l$ and with each node $v$ a non-negative \emph{approximate
value} $z_v$. The approximate value of each node is calculated as follows: 

\begin{enumerate}

\item For each leaf $l$ we set $z_l:=x_l$ 

\item For each internal node $v$ with two children $u_1,u_2$ we set $z_v :=
z_{u_1} \oplus z_{u_2}$, where the $\oplus$ operation is defined below.

\end{enumerate} 

Let $a_v:=z_{u_1}+z_{u_2}$. We will call $a_v$ the \emph{initial approximate
value} of $v$. 

We use $\oplus$ to denote the following operation: for two non-negative numbers
$x_1,x_2$ we define $x_1\oplus x_2:=0$ if $x_1=x_2=0$. Otherwise, select a real
number $r\in(0,1)$ uniformly at random and set $x_{1}\oplus x_{2}:=
(1+\delta)^{\lfloor \log_{(1+\delta)}(x_1+x_2) + r \rfloor}$.

\end{definition}

The motivation behind this definition is that the number of possible values
that can be stored in a node is much smaller than in an exact tree.  In
particular, note that whenever $z_v$ is non-zero, it must be an integer power
of $(1+\delta)$. If the maximum value calculated by the exact tree is at most
polynomial in $n$ and $\delta=\Omega(1/\log^c n)$ then there are at most
$\mathrm{poly}(\log n)$ many different values that may appear in an approximate
tree. Using this we will be able to obtain smaller dynamic programming tables
in the next section. First though, we have to show that the approximate values
are close to the correct ones.

Let us now give the definition of approximation error by which we will measure
the progress of our algorithm. Since it is not hard to see that for any node
$v$ for which $y_v=0$ an Approximate Addition Tree will also have $z_v=0$, we
are only concerned with the approximation error for nodes where $y_v\neq 0$.
Therefore, in the remainder we will implicitly assume that we are talking about
a tree where for all $v$, $y_v>0$, because sub-trees with value 0 can be
ignored.

\begin{definition}

Let $v$ be a node of an Addition Tree, $y_v$ its (positive) value and $z_v$ its
approximate value calculated if we view the tree as an Approximate Addition
Tree. Then the error $\lambda_v$ is defined as
$\lambda_v:=\log_{(1+\delta)}\frac{z_v}{y_v}$.

\end{definition}

Note that $\lambda_v$ and $z_v$ are random variables (they depend on the random
rounding choices made during the computation), while $y_v$ is fully specified
once the inputs of the tree are fixed. Informally, $\lambda_v$ measures how
many ``$(1+\delta)$ intervals'' off our approximation is from the correct
interval.

Before we go on, let us make an easy observation that will be sufficient to
handle an important special case.

\begin{theorem} \label{thm:easy}

If an Approximate Addition Tree has maximum depth $h$ then for all nodes $v$ we
always have $|\lambda_v|\le h+1$. Therefore, if $\delta<\frac{\eps}{2h}$ then
for all $v$ we have $\max\{\frac{z_v}{y_v},\frac{y_v}{z_v}\} < 1+\eps$.

\end{theorem}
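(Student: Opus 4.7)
The plan is structural induction on the height of the subtree rooted at $v$, combined with a clean decomposition of the error at each internal node. For $v$ with children $u_1,u_2$, I split
\[
\lambda_v \;=\; \gamma_v + \rho_v, \qquad \gamma_v:=\log_{(1+\delta)}(a_v/y_v), \qquad \rho_v:=\log_{(1+\delta)}(z_v/a_v),
\]
so that $\gamma_v$ is the ``addition error'' propagated from the children and $\rho_v$ is the fresh ``rounding error'' introduced by the $\oplus$ operation at $v$. The target inductive estimate is the per-level bound
\[
|\lambda_v| \;<\; \max\bigl(|\lambda_{u_1}|,|\lambda_{u_2}|\bigr) + 1,
\]
which, iterated over the at-most-$h$ levels of the subtree (with $\lambda_v=0$ at leaves), yields $|\lambda_v|<h_v\le h$, and in particular $|\lambda_v|\le h+1$.

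Next I would bound the two pieces separately. The $\oplus$ definition writes $\log_{(1+\delta)}(z_v)=\lfloor \log_{(1+\delta)}(a_v)+r\rfloor$ for some $r\in(0,1)$, and since $\lfloor s+r\rfloor-s\in(-1,1)$ for any real $s$ and $r\in(0,1)$, we get $|\rho_v|<1$ deterministically (regardless of the random choice). For $\gamma_v$, substituting $z_{u_i}=(1+\delta)^{\lambda_{u_i}}y_{u_i}$ into $a_v=z_{u_1}+z_{u_2}$ gives
\[
\frac{a_v}{y_v} \;=\; \frac{y_{u_1}}{y_v}(1+\delta)^{\lambda_{u_1}} + \frac{y_{u_2}}{y_v}(1+\delta)^{\lambda_{u_2}},
\]
a convex combination of the two positive numbers $(1+\delta)^{\lambda_{u_1}}$ and $(1+\delta)^{\lambda_{u_2}}$. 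Hence $a_v/y_v$ lies between them, and applying the monotone function $\log_{(1+\delta)}$ yields $\gamma_v\in[\min(\lambda_{u_1},\lambda_{u_2}),\max(\lambda_{u_1},\lambda_{u_2})]$, so $|\gamma_v|\le\max(|\lambda_{u_1}|,|\lambda_{u_2}|)$. The triangle inequality then gives the per-level bound.

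For the ``therefore'' part, note that $\max\{z_v/y_v,y_v/z_v\}=(1+\delta)^{|\lambda_v|}$. By Lemma~\ref{lem:facts}(\ref{item:a}), this is at most $e^{\delta|\lambda_v|}$; combining the strict bound $|\lambda_v|<h$ coming from the induction with $\delta<\eps/(2h)$ gives $\delta|\lambda_v|<\eps/2$, and Lemma~\ref{lem:facts}(\ref{item:b}) (applied with $x=\eps$) yields $e^{\eps/2}\le 1+\eps$, so $(1+\delta)^{|\lambda_v|}<1+\eps$, as required. I expect no real obstacle here: the bound holds pointwise over the random choices, so no concentration argument is needed. The genuinely hard setting, where $h$ may be as large as $n$, is deferred to Lemma~\ref{lem:paths} and Theorem~\ref{thm:trees}, where the probabilistic ``self-correction'' of the rounding becomes essential.
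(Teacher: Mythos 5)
Your proof is correct and follows essentially the same route as the paper's: induction on the depth, with the observation that each internal node increases the error by at most one $(1+\delta)$-factor (your convexity argument shows the addition step cannot increase the error, and the rounding step adds less than $1$ to $|\lambda_v|$), which is exactly the content of the paper's terse inductive step. Your version is even slightly sharper ($|\lambda_v|<h$ rather than $h+1$, since leaves are stored exactly), and the only implicit restriction is $\eps<\frac{1}{2}$ when invoking Lemma~\ref{lem:facts}-(\ref{item:b}), which matches the parameter regime used throughout the paper.
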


\begin{proof}

The proof is simple and proceeds by induction on $h$. For trees of height 0
(that is, isolated nodes), $z_v$ has the correct interval, so $|\lambda_v|\le
1$. For the inductive step observe that when adding two values the maximum
absolute relative error  cannot increase by more than a factor of $1+\delta$.
\qed

\end{proof}

As a consequence of Theorem \ref{thm:easy} we get that in trees of height
$O(\log n)$ we can set $\delta = \Theta(1/\log n)$ and get error at most
$1+\eps$ everywhere \emph{with probability 1}. Thus, such balanced trees are an
easy case which is already solved,  even without the use of randomization. As
mentioned though, we also need to handle the much more complicated general
case.

The main intuitive observation that we will use to give an approximation
guarantee can be summarized as follows: the process by which the initial
approximation $a_v$ is calculated is ``self-correcting'', while the rounding
step that follows it is unbiased.  Therefore, the whole process tends to
self-correct.  More precisely, we will show that if
$\lambda_{u_1},\lambda_{u_2}$ are the errors of the two children of a node,
then the error for $a_v$ is somewhere between the two. This means that unless
the two errors are exactly equal, the maximum of the two errors will have a
tendency to decrease.  This intuition will be made more precise in Lemma
\ref{lem:self-correct}. 

First, we need to introduce a definition and an auxiliary lemma.

\begin{definition}

Let $v$ be an internal node of an Approximate Addition Tree, and $a_v$ its
initial approximate value as in Definition \ref{def:AAT}.  We define $p_v:=
\log_{(1+\delta)}(a_v)-\lfloor \log_{(1+\delta)}(a_v)\rfloor$.

\end{definition}

Of course $p_v$ is again a random variable, since it depends on $a_v$.
However, note that $p_v$, as defined, is exactly equal to the probability of
``rounding up'', that is, the probability (over the choice of $r\in(0,1)$) that
$\lfloor \log_{(1+\delta)}(a_v) + r \rfloor > \lfloor \log_{(1+\delta)}(a_v)
\rfloor$.  To see this, observe that we round up if and only if
$\log_{(1+\delta)}(a_v) + r \ge \lfloor \log_{(1+\delta)}(a_v)\rfloor +1
\Leftrightarrow r\ge 1-p_v$ which is an event that has probability $p_v$, since
$r$ is selected uniformly at random and $p_v\in[0,1]$.

\begin{lemma} \label{lem:y2big}

Consider an Approximate Addition Tree with parameter $\delta<\frac{1}{2}$ and
let $v$ be an internal node with two children $u_1,u_2$.  Then $z_{u_2} \ge
\frac{1}{2} \delta p_v z_{u_1}$.

\end{lemma}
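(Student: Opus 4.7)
The plan is to split into a trivial case where $z_{u_2}$ is already ``large,'' and a main case where $z_{u_2}$ is small enough that the initial approximate value $a_v$ cannot have crossed the next power of $(1+\delta)$ above $z_{u_1}$; in that main case $p_v$ will admit a clean closed form which I can then bound directly.

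First I would dispose of the easy cases. If $z_{u_2} \ge z_{u_1}$, the bound is immediate since $\frac{1}{2}\delta p_v < 1$ (using $\delta<\frac{1}{2}$ and $p_v\in[0,1)$). If $z_{u_2}=0$, then $a_v=z_{u_1}$ is itself a power of $(1+\delta)$, so $p_v=0$ and the inequality reduces to $0\ge 0$.

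The real case is $0 < z_{u_2} \le z_{u_1}$. Here I would write $z_{u_1} = (1+\delta)^k$ (an integer power, by construction of an Approximate Addition Tree) and set $t := z_{u_2}/z_{u_1} \in (0,1]$. I then split on the size of $t$ relative to $\delta$. If $t \ge \delta$, then $z_{u_2} \ge \delta z_{u_1} \ge \frac{1}{2}\delta p_v z_{u_1}$ using $p_v<1$, and we are done. If $t < \delta$, then
\[
a_v \;=\; z_{u_1}(1+t) \;<\; (1+\delta)^{k+1},
\]
so $\lfloor \log_{(1+\delta)}(a_v) \rfloor = k$ and therefore
\[
p_v \;=\; \log_{(1+\delta)}(1+t) \;=\; \frac{\ln(1+t)}{\ln(1+\delta)}.
\]
At this point I invoke Lemma \ref{lem:facts}: item \ref{item:a} gives $\ln(1+t)\le t$, and item \ref{item:b} (applied at $x=\delta$) gives $\ln(1+\delta)\ge\delta/2$. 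Combining these yields $p_v \le 2t/\delta$, i.e.\ $t \ge \delta p_v/2$, which rearranges to the desired $z_{u_2}\ge \frac{1}{2}\delta p_v z_{u_1}$.

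There is no real obstacle here: the lemma is essentially a deterministic statement about how much ``rounding slack'' $p_v$ the operation $\oplus$ can have when one operand dominates, and the only subtlety is choosing the right threshold $t=\delta$ for the case split so that both sub-cases close cleanly. Note that this argument does not use anything about how $z_{u_1},z_{u_2}$ were produced by the tree below $v$; it relies only on the fact that both are non-negative multiples of powers of $(1+\delta)$ (or zero), which is guaranteed by Definition \ref{def:AAT}. \qed
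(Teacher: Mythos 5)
Your proof is correct and follows essentially the same route as the paper: the paper bounds $p_v \le \log_{(1+\delta)}\bigl(1+\frac{z_{u_2}}{z_{u_1}}\bigr)$ using that $z_{u_1}$ is an integer power of $(1+\delta)$ and then applies exactly the two facts you use ($\ln(1+t)\le t$ and $\ln(1+\delta)\ge\delta/2$ from Lemma~\ref{lem:facts}), which is the same core inequality $p_v \le 2z_{u_2}/(\delta z_{u_1})$ as in your $t<\delta$ subcase. Your case split at $t=\delta$ (to get the exact formula for $p_v$) is an unnecessary but harmless refinement of the paper's one-sided bound, and note that, like the paper, you do rely on $z_{u_1}$ being exactly an integer power of $(1+\delta)$, not merely ``a multiple of a power.''
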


\begin{proof}

We have $p_v = \log_{(1+\delta)}(a_v)-\lfloor \log_{(1+\delta)}(a_v)\rfloor$,
but $a_v = z_{u_1}+z_{u_2}$ therefore $\lfloor \log_{(1+\delta)}(a_v)\rfloor
\ge \lfloor \log_{(1+\delta)}(z_{u_1})\rfloor = \log_{(1+\delta)}(z_{u_1})$
where we used the fact that $z_{u_1}$ is an integer power of $(1+\delta)$. 

We now have $p_v \le
\log_{(1+\delta)}(z_{u_1}+z_{u_2})-\log_{(1+\delta)}(z_{u_1}) =
\log_{(1+\delta)}\left( 1 + \frac{z_{u_2}}{z_{u_1}} \right)$. Therefore, to
establish the lemma it is sufficient to show that $\frac{z_{u_2}}{z_{u_1}} \ge
\frac{1}{2}\delta \log_{(1+\delta)}\left( 1 + \frac{z_{u_2}}{z_{u_1}}\right)$.
To ease notation let $\gamma = \frac{z_{u_2}}{z_{u_1}}$.

We have $\gamma \ge \frac{\delta}{2} \log_{(1+\delta)}(1+\gamma)
\Leftrightarrow \gamma \ln(1+\delta) \ge \frac{\delta}{2} \ln(1+\gamma)
\Leftrightarrow (1+\delta)^\gamma \ge (1+\gamma)^{\frac{\delta}{2}}$. Now, from
Lemma \ref{lem:facts}-(\ref{item:b}) and $\delta < \frac{1}{2}$ we have
$(1+\delta)^\gamma \ge e^{\frac{\delta\gamma}{2}}$, while from Lemma
\ref{lem:facts}-(\ref{item:a}) we have $(1+\gamma)^{\frac{\delta}{2}} \le
e^{\frac{\delta\gamma}{2}}$. The result follows. \qed

\end{proof}

We are now ready to give the main self-correction lemma. 

\begin{lemma} \label{lem:self-correct}

Consider an Approximate Addition Tree with parameter $\delta<\frac{1}{2}$. Let
$v$ be an internal node with two children $u_1,u_2$.  If
$\max\{|\lambda_{u_1}|,|\lambda_{u_2}|\}< \frac{1}{4\delta}$ then
$|\log_{(1+\delta)}\frac{a_v}{y_{u_1}+y_{u_2}}| \le
\max\{|\lambda_{u_1}|,|\lambda_{u_2}|\}
- \frac{1}{20} \delta p_v |\lambda_{u_1}-\lambda_{u_2}|$. 

\end{lemma}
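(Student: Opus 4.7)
The plan is to derive the claim from two symmetric one-sided estimates. Assume WLOG that $\lambda_{u_1} \ge \lambda_{u_2}$ and set $s := \lambda_{u_1} - \lambda_{u_2} = |\lambda_{u_1} - \lambda_{u_2}|$, $M := \max\{|\lambda_{u_1}|, |\lambda_{u_2}|\}$, $\alpha := y_{u_1}/(y_{u_1}+y_{u_2})$ and $\beta := 1-\alpha$. The hypothesis $M < 1/(4\delta)$ gives $\delta s < 1/2$, which keeps us in the regime where the estimates of Lemma~\ref{lem:facts} apply. Using $a_v = z_{u_1}+z_{u_2}$ and $z_{u_i} = y_{u_i}(1+\delta)^{\lambda_{u_i}}$, a direct calculation yields $\tau := \log_{(1+\delta)}\frac{a_v}{y_{u_1}+y_{u_2}} = \log_{(1+\delta)}\bigl(\alpha(1+\delta)^{\lambda_{u_1}} + \beta(1+\delta)^{\lambda_{u_2}}\bigr)$. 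I would then aim to prove the two bounds $\tau \le \lambda_{u_1} - \frac{1}{20}\delta p_v s$ and $\tau \ge \lambda_{u_2} + \frac{1}{20}\delta p_v s$. Given both, $|\tau| \le \max(\tau,-\tau) \le \max(\lambda_{u_1},-\lambda_{u_2}) - \frac{1}{20}\delta p_v s$; the assumption $\lambda_{u_1}\ge\lambda_{u_2}$ forces $\max(\lambda_{u_1},-\lambda_{u_2}) = M$, so the lemma follows.

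For the upper bound on $\tau$, rewrite $\tau - \lambda_{u_1} = \log_{(1+\delta)}(1-q)$ with $q := \beta(1-(1+\delta)^{-s})$, and use the elementary bound $\log_{(1+\delta)}(1-q) \le -q/\delta$ (from $\ln(1-q)\le -q$ and $\ln(1+\delta)\le\delta$). It then suffices to show $q = \Omega(\delta^2 p_v s)$ with a large enough constant. For the factor $1-(1+\delta)^{-s}$, Lemma~\ref{lem:facts}(\ref{item:b}) gives $\ln(1+\delta)\ge\delta/2$, so $(1+\delta)^s \ge 1+\delta s/2$ and $1-(1+\delta)^{-s} = \Omega(\delta s)$. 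For the factor $\beta$, apply Lemma~\ref{lem:y2big} with the roles of $u_1,u_2$ swapped (its proof is symmetric in the two children) to obtain $z_{u_2}/z_{u_1} \ge \delta p_v/2$; translating to $y$-values, $\beta/\alpha = (z_{u_2}/z_{u_1})(1+\delta)^s \ge \delta p_v/2$ since $s\ge 0$, whence $\beta = \Omega(\delta p_v)$.

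The lower bound on $\tau$ is the dual statement: rewrite $\tau - \lambda_{u_2} = \log_{(1+\delta)}(1+q')$ with $q' := \alpha((1+\delta)^s - 1)$, and bound $\log_{(1+\delta)}(1+q') \ge q'/(2\delta)$ for $q'\in(0,1]$ (from concavity of $\ln(1+\cdot)$). The main obstacle I expect is the lower bound on $\alpha$: the symmetric application of Lemma~\ref{lem:y2big} gives $z_{u_1}/z_{u_2} \ge \delta p_v/2$, which now translates into $\alpha/\beta = (z_{u_1}/z_{u_2})(1+\delta)^{-s}$, carrying an unfavourable factor of $(1+\delta)^{-s}$ that could a priori drive $\alpha$ to zero. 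This is precisely where the hypothesis $M < 1/(4\delta)$ is crucial: it forces $\delta s < 1/2$, so $(1+\delta)^{-s} \ge e^{-\delta s} \ge e^{-1/2}$ is an absolute positive constant, and $\alpha = \Omega(\delta p_v)$ still holds. Beyond this structural step, everything remaining is routine bookkeeping of the $\Omega(\cdot)$ constants to make the final coefficient be at least $1/20$.
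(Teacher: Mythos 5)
Your proposal is correct, and it is organized differently from the paper's proof, though it leans on the same two pillars (Lemma~\ref{lem:y2big} and the elementary facts of Lemma~\ref{lem:facts}, applied in the regime $\delta|\lambda_{u_1}-\lambda_{u_2}|<\frac{1}{2}$ guaranteed by the hypothesis). The paper takes WLOG $|\lambda_{u_1}|\ge|\lambda_{u_2}|$, writes $\log_{(1+\delta)}\frac{a_v}{y_{u_1}+y_{u_2}}=\lambda_{u_1}+c$ where the correction $c$ is shown to have sign opposite to $\lambda_{u_1}$, so that the absolute error equals $|\lambda_{u_1}|-|c|$; it then lower-bounds $|c|$ by using monotonicity in $y_{u_2}$ and a single application of Lemma~\ref{lem:y2big} ($z_{u_2}\ge\frac{1}{2}\delta p_v z_{u_1}$ with $u_1$ the child of larger absolute error), followed by a chain of estimates. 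You instead order the children by signed error, prove the two-sided sandwich $\lambda_{u_2}+\frac{1}{20}\delta p_v s\le\tau\le\lambda_{u_1}-\frac{1}{20}\delta p_v s$, and recover the lemma via $|\tau|\le\max(\lambda_{u_1},-\lambda_{u_2})-\frac{1}{20}\delta p_v s=\max\{|\lambda_{u_1}|,|\lambda_{u_2}|\}-\frac{1}{20}\delta p_v s$; this costs you both symmetric instances of Lemma~\ref{lem:y2big} (which is fine, its proof is symmetric in the children, though note your ``swapped'' application is actually needed for the bound on $\alpha$, not $\beta$) but buys a slightly stronger conclusion (a quantitative gap from each child's error) and avoids the paper's sign/cancellation step, which is the least transparent part of its argument. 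Your bookkeeping does close: with $\beta\ge\frac{\delta p_v/2}{1+\delta p_v/2}\ge\frac{2}{5}\delta p_v$, $1-(1+\delta)^{-s}\ge\frac{1}{4}\delta s$, one gets $\lambda_{u_1}-\tau\ge\frac{1}{10}\delta p_v s$; and with $\alpha\ge\frac{(\delta p_v/2)e^{-1/2}}{1+(\delta p_v/2)e^{-1/2}}\ge\frac{1}{4}\delta p_v$ and $(1+\delta)^s-1\ge\frac{1}{2}\delta s$ (the $x\ge 0$ case inside the proof of Lemma~\ref{lem:facts}-(\ref{item:e}); the stated $\frac{1}{4}\delta s$ alone would leave you slightly short of $\frac{1}{20}$ unless you also sharpen $\ln(1+q')\ge\frac{q'}{2}$), one gets $\tau-\lambda_{u_2}\ge\frac{1}{16}\delta p_v s$, both comfortably above the required $\frac{1}{20}\delta p_v s$.
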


\begin{proof}

Informally, what this lemma states is that, before the rounding step is
applied, the absolute value of the error will decrease when compared with the
maximum error up to this point. In fact, the decrease will be proportional to
the absolute difference of the two previous errors.

Since we have not assumed an ordering on the children we can assume without
loss of generality that $|\lambda_{u_1}|\ge|\lambda_{u_2}|$. From the
definitions we have $a_v=(1+\delta)^{\lambda_{u_1}}y_{u_1} +
(1+\delta)^{\lambda_{u_2}}y_{u_2}$. If $\lambda_{u_1}=\lambda_{u_2}$ then the
inequality of the lemma is true, so assume that
$|\lambda_{u_1}|>|\lambda_{u_2}|$.

We now have $\log_{(1+\delta)}\frac{a_v}{y_{u_1}+y_{u_2}} = \lambda_{u_1} +
\log_{(1+\delta)}\frac{y_{u_1}+(1+\delta)^{\lambda_{u_2}-\lambda_{u_1}}y_{u_2}}{y_{u_1}+y_{u_2}}
$. Taking into account that $|\lambda_{u_1}|>|\lambda_{u_2}|$ it is not hard to
see that the second term is positive if and only if the first term is negative.
Therefore, $\left|\log_{(1+\delta)}\frac{a_v}{y_{u_1}+y_{u_2}}\right| =
|\lambda_{u_1}| - \left|
\log_{(1+\delta)}\frac{y_{u_1}+(1+\delta)^{\lambda_{u_2}-\lambda_{u_1}}y_{u_2}}{y_{u_1}+y_{u_2}}\right|$.

Now consider the term $\left|
\log_{(1+\delta)}\frac{y_{u_1}+(1+\delta)^{\lambda_{u_2}-\lambda_{u_1}}y_{u_2}}{y_{u_1}+y_{u_2}}\right|$.
We only need to show that this term is at least as large as $\frac{1}{20}\delta
p_v |\lambda_{u_1}-\lambda_{u_2}|$ to establish the lemma. Observe that this
term is an increasing function of $y_{u_2}$. Therefore, we can use a lower
bound on $y_{u_2}$ to give a lower bound on this term. But by Lemma
\ref{lem:y2big} we have $z_{u_2} \ge \frac{1}{2} \delta p_v z_{u_1}
\Leftrightarrow y_{u_2} \ge \frac{1}{2} \delta p_v
(1+\delta)^{\lambda_{u_1}-\lambda_{u_2}}y_{u_1}$. We thus get:

\begin{eqnarray}
\left| \log_{(1+\delta)}\frac{y_{u_1}+(1+\delta)^{\lambda_{u_2}-
  \lambda_{u_1}}y_{u_2}}{y_{u_1}+y_{u_2}}\right|& \ge&
\left|\log_{(1+\delta)}\frac{y_{u_1}+\frac{1}{2}\delta p_v y_{u_1}}{y_{u_1}+\frac{1}{2}\delta
p_v (1+\delta)^{\lambda_{u_1}-\lambda_{u_2}}y_{u_1}}\right| =  \label{eq:1}\\
&& = \left| \log_{(1+\delta)} \frac{1+\frac{1}{2}(1+\delta)^{\lambda_{u_1}-\lambda_{u_2}}\delta p_v}
  {1+\frac{1}{2}\delta p_v} \right| = \label{eq:2}\\
&& = \frac{\left|\ln\left(1+\frac{\frac{1}{2}\delta p_v\left( (1+\delta)^{\lambda_{u_1}-\lambda_{u_2}}-1\right)}{1+\frac{1}{2}\delta p_v}\right)\right|}{\ln(1+\delta)} \label{eq:3} \ge\\
&& \ge \frac{1}{\delta}\left|\ln\left(1+\frac{\frac{1}{2}\delta p_v\left( (1+\delta)^{\lambda_{u_1}-\lambda_{u_2}}-1\right)}{1+\frac{1}{2}\delta p_v}\right)\right| \ge \label{eq:4}\\
&&  \ge \frac{p_v \left| (1+\delta)^{\lambda_{u_1}-\lambda_{u_2}}-1\right|}{4+2\delta p_v} \ge \label{eq:5}\\
&& \ge \frac{1}{4} \frac{ \delta p_v|\lambda_{u_1}-\lambda_{u_2}|}{4+2\delta p_v} \ge\label{eq:6}\\
&& \ge \frac{1}{20} \delta p_v |\lambda_{u_1}-\lambda_{u_2}|\label{eq:7}
\end{eqnarray}

To go from (\ref{eq:1}) to (\ref{eq:2}) we used the fact that
$|\log(x)|=|\log(1/x)|$, while for (\ref{eq:3}) we simply changed the base of
the logarithm and performed some calculations. For (\ref{eq:4}) we use the fact
that $\ln(1+\delta)\le \delta$ which can be inferred from Lemma
\ref{lem:facts}-(\ref{item:a}). To get (\ref{eq:5}) we use Lemma
\ref{lem:facts}-(\ref{item:d}). To see that Lemma
\ref{lem:facts}-(\ref{item:d}) applies observe that
$(1+\delta)^{\lambda_{u_1}-\lambda_{u_2}} \le (1+\delta)^{\frac{1}{2\delta}}
\le \sqrt{e} \le 2$. To go from (\ref{eq:5}) to (\ref{eq:6}) we use Lemma
\ref{lem:facts}-(\ref{item:e}) and the fact that $|\lambda_{u_1}-\lambda_{u_2}|
\in (-\frac{1}{2\delta},\frac{1}{2\delta})$.  Finally, (\ref{eq:7}) follows
from the fact that $\delta<\frac{1}{2}$ and $p_v\le 1$. \qed

\end{proof}

Lemma \ref{lem:self-correct} will be our main tool in proving that with high
probability the values of an Approximate Addition Tree are not too far from
those of the corresponding exact tree. We will proceed by induction, starting
from a simple case of path-like trees (caterpillars). We need the following
definition:

\begin{definition}

Let $T$ be a rooted full binary tree. The \emph{balanced height} of a node $v$,
denoted $\bh(v)$ is defined as follows:

\begin{enumerate}

\item If $v$ is a leaf then $\bh(v)=0$.

\item If $v$ is an internal node with children $u_1,u_2$ and
$\bh(u_1)>\bh(u_2)$ then $\bh(v)=\bh(u_1)$.

\item If $v$ is an internal node with children $u_1,u_2$ and
$\bh(u_1)=\bh(u_2)$ then $\bh(v)=\bh(v_1)+1$.

\end{enumerate}

\end{definition}

\begin{lemma} \label{lem:bheight}

In any full binary tree with $n$ nodes and root $r$ we have $n\ge 2^{\bh(r)}$.

\end{lemma}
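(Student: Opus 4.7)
The plan is to prove this by structural induction on the full binary tree $T$. The base case is immediate: if the tree consists of a single node, then $r$ is a leaf, so $\bh(r)=0$ and $n=1\ge 2^0$.

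For the inductive step, let $r$ have two children $u_1,u_2$ with subtree sizes $n_1,n_2$ and balanced heights $h_1=\bh(u_1)$, $h_2=\bh(u_2)$. The induction hypothesis gives $n_i\ge 2^{h_i}$ for $i\in\{1,2\}$. I would now split into two cases according to the three-part definition of balanced height. If $h_1\ne h_2$, then $\bh(r)=\max\{h_1,h_2\}$, and the larger of the two subtrees already contains at least $2^{\bh(r)}$ nodes, so $n=1+n_1+n_2\ge 2^{\bh(r)}$. If $h_1=h_2$, then $\bh(r)=h_1+1$, and summing the two subtree bounds yields $n\ge 1+2\cdot 2^{h_1}>2^{h_1+1}=2^{\bh(r)}$.

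There is no real obstacle here; the lemma is essentially a direct unpacking of the recursive definition of $\bh$, combined with the fact that balanced height can only strictly grow when both subtrees contribute the same balanced height (and therefore, inductively, both contribute enough nodes to double the bound). The only care needed is to make sure the case split covers exactly the three clauses in the definition of $\bh$, which the two-case argument above does since the second and third clauses of the definition collapse into the single case $h_1\ne h_2$ versus $h_1=h_2$.
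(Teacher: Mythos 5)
Your proof is correct and follows essentially the same route as the paper: induction on the tree, splitting on whether the two children have equal balanced height (doubling the bound) or not (inheriting the bound from the child of larger balanced height). No issues.
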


\begin{proof}

The proof proceeds by induction on $n$. The lemma is trivial for $n=1$. For a
larger tree, consider the two trees rooted at children of the root. If they
have the same balanced height then by induction they both have at least
$2^{\bh(r)-1}$ nodes, which means the whole tree has at least $2^{\bh(r)}$
nodes.  Otherwise, one has balanced height $\bh(r)$ and by induction that
sub-tree contains at least $2^{\bh(r)}$ nodes.  \qed

\end{proof}

We will proceed inductively to prove a bound on the probability of a large
error occurring during the computation of an Approximate Addition Tree. Our
bound will depend on $\delta,n$ and the balanced height of the root of the
tree. 

To begin, observe that the case $\bh(r)=0$ is trivial, as this can only occur
if the tree contains only one node. So the first interesting case is
$\bh(r)=1$, which happens if the rooted tree is made up of a single path with a
leaf attached to each internal node and the root. 

In the remainder we will assume we are dealing with an Approximate Addition
Tree with $n$ nodes, root node $r$ and parameter $\delta\in(0,\frac{1}{2})$. We
first establish the following lemma:

\begin{lemma} \label{lem:paths}

If $\bh(r)\le 1$ then for all
$\lambda\in(\frac{2}{\sqrt{\delta}},\frac{1}{4\delta})$ we have $\PR{\exists v
\in T:\ |\lambda_v| > \lambda} \le 2n^2 e^{-\frac{\lambda\sqrt{\delta}}{20}}$.

\end{lemma}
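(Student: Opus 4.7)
The plan is to observe that with $\bh(r)\le 1$ the tree must be a caterpillar: there is a spine of internal nodes $v_1,v_2,\ldots,v_k$ with $v_1=r$ where $v_j$ (for $j<k$) has $v_{j+1}$ and a leaf as its two children, while $v_k$ has two leaves. It then suffices to analyze the induced one-dimensional random walk of errors $X_1,X_2,\ldots,X_k$ along the spine (computed from the bottom up, so I set $X_i:=\lambda_{v_{k-i+1}}$). For each $i\ge 2$, write $X_i=L_i+R_i$, where $L_i$ is determined by $X_{i-1}$ and $p_i:=p_{v_{k-i+1}}$ (Lemma~\ref{lem:self-correct} gives $|L_i|\le|X_{i-1}|(1-\delta p_i/20)$ with the same sign as $X_{i-1}$, whenever $|X_{i-1}|<1/(4\delta)$), and the rounding term $R_i$ takes the value $1-p_i$ with probability $p_i$ and $-p_i$ with probability $1-p_i$. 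The base case $|X_1|\le 1$ is immediate since the children of $v_k$ are leaves.

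The heart of the argument is a Chernoff-style bound. I pick the MGF parameter $\theta:=\sqrt\delta/20$, chosen precisely so that the self-correction cross-over threshold $20\theta/\delta$ equals $1/\sqrt\delta$, matching the lower edge of the range of $\lambda$. Using Lemma~\ref{lem:facts}-(\ref{item:c}) one shows $\E{e^{\theta R_i}\mid p_i}\le e^{p_i\theta^2}$ and $\E{e^{-\theta R_i}\mid p_i}\le e^{(1-p_i)\theta^2}$. Working with the symmetrized potential $Y_i:=e^{\theta X_i}+e^{-\theta X_i}$, and combining these MGF bounds with the self-correction bound on $L_i$ (WLOG $X_{i-1}\ge 0$ handles signs), I aim to establish the additive recursion
\[\E{Y_i\mid\mathcal{F}_{i-1}}\le Y_{i-1}+C\]
for an absolute constant $C$, whenever $|X_{i-1}|<1/(4\delta)$. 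The main technical obstacle---and the reason for this specific choice of $\theta$---is that the exponent $p_i(\theta^2-\theta|X_{i-1}|\delta/20)$ appearing in the dominant $e^{\theta L_i}\E{e^{\theta R_i}}$ term flips from positive to negative exactly at $|X_{i-1}|=1/\sqrt\delta$; in the complementary regime $|X_{i-1}|<1/\sqrt\delta$, $Y_{i-1}$ is itself bounded by an absolute constant (since $\theta|X_{i-1}|\le 1/20$), so any mild multiplicative blow-up can be absorbed into $C$.

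Finally, to cope with the hypothesis $|X_{i-1}|<1/(4\delta)$, which fails precisely on the event we are trying to bound, I will use a first-passage argument. Let $\tau:=\inf\{i:|X_i|>\lambda\}$ and $G_{i-1}:=\{|X_s|\le\lambda\text{ for all }s<i\}$; since $\lambda<1/(4\delta)$, the self-correction hypothesis holds on $G_{i-1}$. Because $\mathbf{1}_{G_{i-1}}$ is $\mathcal{F}_{i-1}$-measurable and $G_i\subseteq G_{i-1}$, the stopped potential $Z_i:=Y_i\mathbf{1}_{G_{i-1}}$ inherits the recursion $\E{Z_i\mid\mathcal{F}_{i-1}}\le Z_{i-1}+C$, so $\E{Z_i}\le\E{Z_1}+(i-1)C=O(i)$. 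Markov's inequality then gives $\PR{\tau=i}\le\PR{Z_i\ge e^{\theta\lambda}}\le\E{Z_i}e^{-\theta\lambda}=O(i)e^{-\theta\lambda}$, and summing over $i\le k\le n$ yields the stated bound $O(n^2)e^{-\lambda\sqrt\delta/20}$ (leaves satisfy $\lambda_v=0$ and contribute nothing to the union bound).
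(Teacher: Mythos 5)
This is essentially the paper's own argument: the same MGF parameter $t=\sqrt{\delta}/20$, the same use of Lemma~\ref{lem:self-correct} with the leaf child contributing zero error, the same case split at $|X_{i-1}|=1/\sqrt{\delta}$ (self-correction dominates above it, the potential is an absolute constant below it), additive growth of the exponential moment to $O(n)$, then Markov plus a union/sum over the at most $n$ spine nodes; your symmetrized potential $2\cosh(\theta X_i)$ plays the role of the paper's $e^{t|\lambda_v|}$ with the sign-corrected variable $r'_v$, and your stopped process $Z_i=Y_i\mathbf{1}_{G_{i-1}}$ is a cleaner formalization of the conditioning that the paper treats implicitly. One small correction: in the mirror case $X_{i-1}<0$ the dominant term requires $\E{e^{-\theta R_i}\mid p_i}\le e^{p_i\theta^2}$ — which is true and follows from the identical computation (it is exactly the paper's claim for $r'_v=-r_v$) — rather than the $e^{(1-p_i)\theta^2}$ bound you state, since the self-correction gain is proportional to $p_i$; with that substitution your ``WLOG on the sign'' is legitimate and the proof goes through.
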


\begin{proof}

We will first bound the probability that a certain node is the first to have
absolute error greater than $\lambda$, that is, the probability that it has
such an error even though all its descendants did not. Then we will take a
union bound over all nodes. Observe that it is sufficient to consider internal
nodes, since the errors for leaves are by definition 0.

Consider an arbitrary internal node $v$. We will bound the conditional
probability that its absolute error is larger than $\lambda$, given the fact
that all its descendants have smaller absolute error. (To ease presentation we
will omit this conditioning, which should be read implicitly in the remainder
of this proof). Let $t\in(0,\frac{1}{2})$ be a parameter we will set later.  We
have 

$$\PR{ |\lambda_v| > \lambda} = \PR{ e^{t|\lambda_v|}
> e^{t\lambda}} \le
 \frac{\E{e^{t|\lambda_v|}}}{e^{t\lambda}}$$

where for the last step we applied Markov's inequality, since we are dealing
with a non-negative random variable.

Let us now try to bound the expectation which appears on the right-hand side.
The node $v$ has two children $u_1,u_2$, one of which must be a leaf, since the
tree has balanced height 1. Without loss of generality let $u_2$ be the leaf,
and therefore $\lambda_{u_2}=0$. By applying Lemma \ref{lem:self-correct} we
get that $|\log_{(1+\delta)}(a_v/y_v)|\le |\lambda_{u_1}| - \frac{1}{10}\delta
p_v |\lambda_{u_1}|$, where we are using the fact that we are conditioning
under the event that $|\lambda_{u_1}|<\lambda \le \frac{1}{4\delta}$.

Let us now look at the rounding step. Recall that we denote by $p_v$ the
probability of ``rounding up''. We have that $\log_{(1+\delta)}(z_v) =
\log_{(1+\delta)}(a_v) + r_v$, where $r_v$ is a random variable that depends on
the choice of $r\in(0,1)$. In particular, $r_v$ has the following distribution:
with probability $p_v$ we have $\log_{(1+\delta)}(z_v) = \lfloor
\log_{(1+\delta)}(a_v) \rfloor + 1$ and therefore $ r_v = 1 - p_v$.  On the
other hand, with probability $1-p_v$ we have $\log_{(1+\delta)}(z_v) = \lfloor
\log_{(1+\delta)}(a_v) \rfloor$ and therefore $ r_v = -p_v$.

Rewriting we get $\lambda_v = \log_{(1+\delta)}\frac{z_v}{y_v} =
\log_{(1+\delta)}\frac{a_v}{y_v} + r_v \Rightarrow |\lambda_v| =
|\log_{(1+\delta)}\frac{a_v}{y_v} + r_v| = |\log_{(1+\delta)}\frac{a_v}{y_v}| +
r'_v \le |\lambda_{u_1}| - \frac{1}{10}\delta p_v |\lambda_{u_1}| + r'_v$ where
we define $r'_v$ as follows: $r'_v:= r_v$ if $a_v\ge y_v$ and $r'_v:=-r_v$
otherwise. Let us now bound the change caused in the expectation by the
rounding step.

\begin{claim} 

For $t>0$ we have $\E{e^{t r'_v}\ |\ p_v}\le e^{t^2p_v }$

\end{claim}

\begin{proof}

In case $r'_v=r_v$ we have $\E{e^{t r_v}|p_v} = p_v e^{t(1-p_v)} + (1-p_v)
e^{-tp_v} = e^{-tp_v}(1-p_v+p_v e^t) \le e^{-tp_v} e^{-p_v+p_ve^t} \le e^{-tp_v
- p_v + p_v + tp_v + t^2p_v} = e^{t^2p_v }$, where we used Lemma
\ref{lem:facts}-(\ref{item:a}) and Lemma \ref{lem:facts}-(\ref{item:c}).

In case $r'_v=-r_v$ we have $\E{e^{-t r_v}|p_v} = p_v e^{-t(1-p_v)} + (1-p_v)
e^{tp_v} = e^{tp_v}(1-p_v+p_v e^{-t}) \le e^{tp_v} e^{-p_v+p_ve^{-t}} \le
e^{tp_v -p_v + p_v -tp_v +t^2p_v} = e^{t^2p_v}$ where again we used Lemma
\ref{lem:facts}. \qed

\end{proof}

We now have:

\begin{eqnarray}
 \E{e^{t|\lambda_v|}} &=& \E{e^{t|\lambda_v|}\ |\  |\lambda_{u_1}|\ge \frac{1}{\sqrt{\delta}}} \PR{|\lambda_{u_1}|\ge \frac{1}{\sqrt{\delta}} } 
  + \E{e^{t|\lambda_v|}\ |\ |\lambda_{u_1}|< \frac{1}{\sqrt{\delta}}} \PR{|\lambda_{u_1}|<
\frac{1}{\sqrt{\delta}}} \label{eq:8}
\end{eqnarray}

The second term of the right-hand-side of (\ref{eq:8}) can be upper-bounded by
$e^{t+t\sqrt{1/\delta}}$, since $|\lambda_v|\le |\lambda_{u_1}|+1$. We now seek
to find a value of $t$ such that the first term is upper-bounded by
$\E{e^{t|\lambda_{u_1}|}}$.

\begin{eqnarray}
 \E{e^{t|\lambda_v|}\ |\ |\lambda_{u_1}|\ge \sqrt{1/\delta}} &\le& 
  \E{e^{t|\lambda_{u_1}| - t\frac{\delta}{20} p_v |\lambda_{u_1}| + t r'_v}\ |\ |\lambda_{u_1}|\ge \sqrt{1/\delta}} \le
\label{eq:9}\\
 &\le& \E{\E{e^{t|\lambda_{u_1}| - t\frac{\sqrt{\delta}}{20} p_v + t
r'_v}\ |\ p_v}\ |\ |\lambda_{u_1}| \ge \sqrt{1/\delta}} \le \label{eq:10}\\ 
&\le& \E{e^{t|\lambda_{u_1}| - t\frac{\sqrt{\delta}}{20} p_v
 + t^2p_v}\ |\ |\lambda_{u_1}| \ge \sqrt{1/\delta}} \label{eq:11}
\end{eqnarray}

To obtain (\ref{eq:9}) we used Lemma \ref{lem:self-correct}. Then to obtain
(\ref{eq:10}) we used the fact that we are conditioning on
$|\lambda_{u_1}|\ge\sqrt{1/\delta}$ to replace $\lambda_{u_1}$ in one term of the
exponent. We also used standard properties of conditional expectations to
remove the dependence on $r'_v$, which using the claim gives (\ref{eq:11}). We
can now set $t=\frac{\sqrt{\delta}}{20}$ (so we do indeed have
$t\in(0,\frac{1}{2})$ as promised) and we get 

\begin{eqnarray*}
 \E{e^{\frac{\sqrt{\delta}}{20}|\lambda_v|}} &\le& 
   \E{e^{\frac{\sqrt{\delta}}{20}|\lambda_{u_1}|}\ |\ |\lambda_{u_1}|\ge \sqrt{1/\delta}}\PR{|\lambda_{u_1}|\ge \sqrt{1/\delta}}
   + e^{\frac{\sqrt{\delta}}{20}+\frac{1}{20} } \le \\
 &\le& \E{e^{\frac{\sqrt{\delta}}{20}|\lambda_{u_1}| }} + 2
\end{eqnarray*}

Since we have established the above for any internal node $v$, it now follows
by induction than for any node $v$ we have

$$ \E{e^{\frac{\sqrt{\delta}}{20}|\lambda_v|}} \le 2n $$

Using this fact, we can now conclude that the probability that an arbitrary
internal node $v$ is the first to have absolute error greater than $\lambda$ is
at most $2ne^{-\frac{\lambda\sqrt{\delta}}{20}}$. Therefore, by union bound,
the probability that some node has absolute error at least $\lambda$ is at most
$2n^2e^{-\frac{\lambda\sqrt{\delta}}{20}}$, as claimed. \qed

\end{proof}

We are now ready to extend the previous lemma to trees of arbitrary balanced
height.

\begin{lemma} \label{lem:trees}

Let $T$ be an Approximate Addition Tree with root $r$ and $\bh(r)= h$. Then for
all $\lambda\in(\frac{2}{\sqrt{\delta}},\frac{1}{4\delta})$ we have $\PR{
\exists v\in T\ :\ |\lambda_v| > \lambda h} \le (h+1)n^{h+1}
e^{-\frac{\lambda\sqrt{\delta}}{20}}$. 

\end{lemma}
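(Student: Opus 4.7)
The plan is to proceed by induction on $h = \bh(r)$, with Lemma \ref{lem:paths} providing the base case $h \le 1$. For the inductive step with $h \ge 2$, the strategy is to isolate a \emph{spine} of nodes of balanced height $h$ inside $T$ and combine two ingredients: the inductive hypothesis applied to off-spine subtrees, and a moment-generating-function analysis along the spine, in direct analogy with the proof of Lemma \ref{lem:paths}.

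First I would decompose $T$ as follows. Starting from $r$, define the spine $r = v_m, v_{m-1}, \dots, v_0$ by repeatedly descending to the (unique) child of balanced height $h$ until reaching a descendant $v_0$ both of whose children have balanced height $h-1$ (such a $v_0$ exists by definition of $\bh$). Each spine node $v_i$ has $\bh(v_i) = h$, and for $i \ge 1$ the sibling $u_i$ of $v_{i-1}$ satisfies $\bh(u_i) < h$. Removing the spine partitions $T$ into at most $n$ subtrees (the $u_i$-subtrees together with the two subtrees hanging below $v_0$), each of balanced height at most $h-1$. Applying the inductive hypothesis to each such subtree and taking a union bound gives a ``bad event'' $\mathcal{F}$ of probability at most $h \cdot n^{h+1} e^{-\lambda\sqrt{\delta}/20}$, outside of which every off-spine node satisfies $|\lambda_{(\cdot)}| \le \lambda(h-1)$.

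Conditioning on $\overline{\mathcal{F}}$, the next step is to bound the spine errors via the same MGF technique as in Lemma \ref{lem:paths}, except that now the ``other operand'' at each spine step carries error at most $\lambda(h-1)$ rather than $0$. For the bottom spine node $v_0$, Lemma \ref{lem:self-correct} together with one rounding step gives $|\lambda_{v_0}| \le \lambda(h-1) + 1$ deterministically. For each $i \ge 1$, with $t = \sqrt{\delta}/20$, I would split on whether $|\lambda_{v_{i-1}}| \ge |\lambda_{u_i}| + 1/\sqrt{\delta}$. In this ``drift'' case Lemma \ref{lem:self-correct} supplies a shrinkage of at least $\frac{\sqrt{\delta}}{20} p_{v_i}$ in $|\log_{(1+\delta)}(a_{v_i}/y_{v_i})|$ relative to $|\lambda_{v_{i-1}}|$, which exactly absorbs the rounding contribution $\E{e^{t r'_{v_i}} \mid p_{v_i}} \le e^{t^2 p_{v_i}}$ from the same claim used in Lemma \ref{lem:paths}, yielding $\E{e^{t|\lambda_{v_i}|}} \le \E{e^{t|\lambda_{v_{i-1}}|}}$. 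In the complementary case $|\lambda_{v_i}|$ is deterministically bounded by $\lambda(h-1) + 1/\sqrt{\delta} + 1$, contributing an additive $e^{t(\lambda(h-1)+1/\sqrt{\delta}+1)}$ term to the MGF recurrence. Iterating over the $m \le n$ spine nodes yields $\E{e^{t|\lambda_{v_i}|}} \le O(n)\, e^{t(\lambda(h-1)+1/\sqrt{\delta}+1)}$, and Markov's inequality then gives $\PR{|\lambda_{v_i}| > \lambda h \mid \overline{\mathcal{F}}} = O(n) \, e^{-\lambda\sqrt{\delta}/20}$ for each spine node.

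A final union bound over the at most $n$ spine nodes, added to $\PR{\mathcal{F}}$, produces a total bound of $(h+1) n^{h+1} e^{-\lambda\sqrt{\delta}/20}$ after absorbing constants into the polynomial prefactor. The main obstacle I expect is the bookkeeping around the threshold $1/\sqrt{\delta}$: one must verify that the self-correction drift supplied by Lemma \ref{lem:self-correct} still dominates the rounding fluctuation uniformly along the spine once the baseline off-spine error is $\lambda(h-1)$ instead of $0$, and that the polynomial prefactor grows by only a single $n$ factor per level of induction. Both points follow by tracking the MGF recurrence with essentially the same choices of $t$ and threshold used in Lemma \ref{lem:paths}, which is precisely why those constants were set the way they were.
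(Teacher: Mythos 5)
Your proposal is correct and follows essentially the same route as the paper: induction on $h$, an inductive-hypothesis union bound over the subtrees of balanced height at most $h-1$ (your event $\mathcal{F}$ is the paper's $B_{h-1}$), and then the MGF/self-correction argument of Lemma \ref{lem:paths} applied to the nodes of balanced height $h$, which indeed form the single spine you describe. The only cosmetic difference is that you split on $|\lambda_{v_{i-1}}|\ge|\lambda_{u_i}|+1/\sqrt{\delta}$ while the paper splits on $|\lambda_{u_1}|\ge(h-1)\lambda+1/\sqrt{\delta}$, which are interchangeable once one conditions on the off-spine errors being at most $(h-1)\lambda$.
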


\begin{proof}

The proof proceeds by induction on $h$. The base case for $h=1$ is exactly
Lemma \ref{lem:paths}, which we have already established. Now, assume that the
result has been proved for trees of balanced height up to $h-1$. Let $B_{h-1}$
be the event that there exists some node $u$ with $\bh(u)\le h-1$ such that
$|\lambda_u|>(h-1)\lambda$. Then we have:

$$ \PR{ \exists v\in T\ :\ |\lambda_v| > \lambda h} \le 
 \PR{ B_{h-1} } + 
\PR{ \exists v\in T\ :\ |\lambda_v| > \lambda h\ |\ \neg B_{h-1}} $$ 

Let us bound the two terms separately. For the first term, for every node $u\in
T$ with $\bh(u)\le h-1$ we consider the sub-tree rooted at $u$ and containing
all its descendants. By the inductive hypothesis in a rooted tree of balanced
height at most $h-1$ the probability that there exists a vertex $u$ with
$|\lambda_u|>(h-1)\lambda$ is at most
$hn^he^{-\frac{\lambda\sqrt{\delta}}{20}}$. There are at most $n$ such trees
considered, so by union bound $\PR{ B_{h-1} } \le
hn^{h+1}e^{-\frac{\lambda\sqrt{\delta}}{20}}$.

For the second term, the argument is similar to that of the proof of Lemma
\ref{lem:paths}. Consider an arbitrary node $v$ of the tree such that
$\bh(v)=h$. We will bound the probability that this is the first node with
absolute error greater than $\lambda h$, given that the absolute errors of all
its descendants with the same balanced height are strictly smaller than
$\lambda h$.  This time, we are also conditioning on the event $\neg B_{h-1}$,
so its descendants with smaller balanced height have error at most
$(h-1)\lambda$.

Observe that any node $v$ of balanced height $h$ has at most one child of
balanced height $h$ (otherwise the balanced height of the tree would be at
least $h+1$). If both children of $v$ have balanced height $h-1$ and the event
$\neg B_{h-1}$ is true, then $|\lambda_v|\le (h-1)\lambda+1 < \lambda h$ with
probability 1. So the interesting case is when exactly one child of $v$ has
balanced height $h$.

Let $u_1,u_2$ be the two children of $v$ with $\bh(u_1)=h$ and $\bh(u_2)<h$.
Again, $t\in(0,\frac{1}{2})$ is a parameter to be fixed later and we are
conditioning over the events that all descendants of $v$ have absolute error at
most $\lambda h$ and $\neg B_{h+1}$ is true.

$$\PR{ |\lambda_v| > \lambda h} = \PR{ e^{t|\lambda_v|}
> e^{t\lambda h}} \le
 \frac{\E{e^{t|\lambda_v|}}}{e^{t\lambda h}}$$

We now need to bound the expectation which appears on the right-hand side.
Similarly to Equation (\ref{eq:8}) of Lemma \ref{lem:paths} we have:

\begin{eqnarray}
 \E{e^{t|\lambda_v|}} &=& \E{e^{t|\lambda_v|}\ |\  |\lambda_{u_1}|\ge (h-1)\lambda + \sqrt{1/\delta}} \PR{|\lambda_{u_1}|\ge (h-1)\lambda + \sqrt{1/\delta}} \nonumber \\
 && + \E{e^{t|\lambda_v|}\ |\ |\lambda_{u_1}|< (h-1)\lambda + \sqrt{1/\delta}} \PR{|\lambda_{u_1}|< (h-1)\lambda+
\sqrt{1/\delta}} \label{eq:12}
\end{eqnarray}

By using the fact that $\neg B_{h-1}$ is true we know that if
$|\lambda_{u_1}|>(h-1)\lambda + \frac{1}{\sqrt{\delta}}$ then the difference
$|\lambda_{u_1}-\lambda_{u_2}|\ge\frac{1}{\sqrt{\delta}}$. Thus, we can
upper-bound the first term by $\E{e^{t|\lambda_{u_1}|}}$ in exactly the same
way as in Lemma \ref{lem:paths} by setting $t=\frac{\sqrt{\delta}}{20}$.  The
second term is at most $e^{t((h-1)\lambda+\frac{1}{\sqrt{\delta}}+1)}$. We can
now use the same argument as in Lemma \ref{lem:paths} to get

\begin{eqnarray*}
\E{e^{t|\lambda_v|}} \le n e^{\frac{\sqrt{\delta}}{20}((h-1)\lambda+\frac{1}{\sqrt{\delta}}+1)} &\Rightarrow& 
\PR{ |\lambda_v| > \lambda h} \le 2n e^{-\frac{\lambda\sqrt{\delta}}{20}}
\end{eqnarray*}

The above is an upper bound on the probability that an arbitrary node $v$ of
balanced height $h$ is the first to have absolute error more than $\lambda h$,
assuming all nodes of balanced height at most $h-1$ have error at most
$(h-1)\lambda$. By union bound, the probability that any node of balanced
height $h$ has error more than $\lambda h$ (again conditioned on $\neg
B_{h-1}$) is at most $2n^2e^{-\frac{\lambda\sqrt{\delta}}{20}}$. Summing with
the upper bound we have on the probability of $B_{h-1}$ we have that the
probability of any node having absolute error more than $\lambda h$ is at most
$ hn^{h+1}e^{-\frac{\lambda\sqrt{\delta}}{20}} +
2n^2e^{-\frac{\lambda\sqrt{\delta}}{20}} <
(h+1)n^{h+1}e^{-\frac{\lambda\sqrt{\delta}}{20}}$ and the result follows. \qed

\end{proof}

We are now ready to prove the main theorem of this section

\begin{theorem} \label{thm:trees}

Let $T$ be an Approximate Addition Tree on $n$ nodes with parameter
$\delta\in(0,\frac{1}{2})$.  There exists a fixed constant $C>0$ such that for
all $\eps\in(0,\frac{1}{8})$ and sufficiently large $n$, if
$\delta<\frac{\eps^2}{C\log^6n}$ we have

$$ \PR{\exists v\in T: \max\{\frac{z_v}{y_v},\frac{y_v}{z_v}\} > 1+\eps} \le
n^{-\log n}$$

\end{theorem}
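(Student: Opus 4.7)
The plan is to apply Lemma \ref{lem:trees} with a carefully tuned value of the parameter $\lambda$, and then translate the additive bound on $|\lambda_v|$ into the desired multiplicative bound on $z_v/y_v$. First, by Lemma \ref{lem:bheight} the balanced height $h := \bh(r)$ satisfies $h \le \log n$. I would set $\lambda := C_1 (\log n)^2/\sqrt{\delta}$ for a sufficiently large absolute constant $C_1$, so that the failure probability $(h+1)\,n^{h+1}\,e^{-\lambda\sqrt{\delta}/20}$ guaranteed by Lemma \ref{lem:trees} is at most $n^{-\log n}$. Indeed, since $h \le \log n$, the prefactor $(h+1)n^{h+1}$ is only $n^{\log n + O(1)}$, so choosing $C_1$ large enough makes $\lambda\sqrt{\delta}/20 = C_1 (\log n)^2/20$ exceed $\ln(n^{\log n}) + \ln((h+1)n^{h+1}) = O((\log n)^2)$, yielding the claimed probability.

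Before invoking the lemma I need to verify that $\lambda$ lies in its required range $(2/\sqrt{\delta}, 1/(4\delta))$. The lower bound is immediate for large $n$. The upper bound is equivalent to $\sqrt{\delta} < 1/(4C_1 (\log n)^2)$, which will follow from the hypothesis $\delta < \eps^2/(C\log^6 n)$ for large enough $C$, using $\eps < 1/8$.

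Conditioned on the high-probability event from Lemma \ref{lem:trees}, every node $v$ satisfies $|\lambda_v| \le \lambda h \le C_1 (\log n)^3/\sqrt{\delta}$. Since by definition $z_v/y_v = (1+\delta)^{\lambda_v}$, the quantity $\max\{z_v/y_v, y_v/z_v\}$ equals $(1+\delta)^{|\lambda_v|}$, which by Lemma \ref{lem:facts}-(\ref{item:a}) is at most $e^{\delta|\lambda_v|} \le e^{C_1 \sqrt{\delta}(\log n)^3}$. By Lemma \ref{lem:facts}-(\ref{item:b}) and $\eps < 1/2$ we have $e^{\eps/2} \le 1+\eps$, so it suffices to ensure $C_1 \sqrt{\delta}(\log n)^3 \le \eps/2$, i.e., $\delta \le \eps^2/(4C_1^2 (\log n)^6)$. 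Taking $C := 4C_1^2$ (enlarged if needed to also enforce the range constraint on $\lambda$) completes the argument.

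The entire technical difficulty has already been absorbed into Lemma \ref{lem:trees}; the only delicate point here is simultaneously satisfying two competing requirements on $\lambda$: large enough that $e^{-\lambda\sqrt{\delta}/20}$ beats the $n^{h+1}$ union-bound loss, yet small enough that $\lambda h \cdot \delta \le \eps/2$ so that the accumulated additive error translates back to a $(1+\eps)$ multiplicative guarantee. Each of these forces a factor of roughly $(\log n)^3/\sqrt{\delta}$ in $\lambda$, and combining them (squaring to remove the $\sqrt{\delta}$) produces the characteristic $\eps^{-2} \log^6 n$ denominator for $\delta$.
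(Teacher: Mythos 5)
Your proposal is correct and follows essentially the same route as the paper: both proofs reduce the theorem to a single invocation of Lemma \ref{lem:trees}, using Lemma \ref{lem:bheight} to bound $h\le\log n$, checking that the chosen $\lambda$ lies in $(\frac{2}{\sqrt{\delta}},\frac{1}{4\delta})$, and converting the additive bound $|\lambda_v|\le\lambda h$ into the multiplicative $(1+\eps)$ guarantee via Lemma \ref{lem:facts}. The only difference is cosmetic: you fix $\lambda=C_1\log^2 n/\sqrt{\delta}$ so the failure probability is immediate and then verify $\delta\lambda h\le\eps/2$, while the paper fixes $\lambda$ proportional to $\eps/(\delta h)$ so the error translation is immediate and then verifies the probability; these are the same tuning seen from opposite ends.
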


\begin{proof}

The theorem follows from Lemma \ref{lem:trees} as follows. First, note that
$|\lambda_v|>\lambda h \Leftrightarrow \max\{\frac{z_v}{y_v},\frac{y_v}{z_v}\}
> (1+\delta)^{\lambda h}$. By Lemma \ref{lem:facts} $(1+\delta)^{\lambda h} \ge
 e^{\frac{\delta \lambda h}{2}} \ge 1 + \delta \lambda h/2$.

Now we set $\lambda=\frac{2\eps}{\delta h}$. Notice that $\frac{2\eps}{\delta
h}< \frac{1}{4\delta}$ because $h\ge 1$ (otherwise the tree is trivial). Also,
$\frac{2\eps}{\delta h} > \frac{2}{\sqrt{\delta}} \Leftrightarrow \eps > h
\sqrt{\delta}$ which holds because by Lemma \ref{lem:bheight} we have $h\le
\log n$. Therefore, we can apply Lemma \ref{lem:trees} for the chosen
$\lambda$. We get

$$ \PR{ \exists v\in T: \max\{\frac{z_v}{y_v},\frac{y_v}{z_v}\} > 1 + \eps} \le
(h+1) n^{h+1} e^{-\frac{\eps}{10h\sqrt{\delta}}} $$

The result follows by noting that the bound on the right is an increasing
function of $h$ and $h\le \log n$. \qed

\end{proof}

\section{Approximation Schemes} \label{sec:algs}

We are now ready to use the results of the previous section to design some
approximation algorithms. As mentioned, we only need Definitions
\ref{def:AT},\ref{def:AAT} (including the definition of the $\oplus$ operation)
and Theorems \ref{thm:easy}, \ref{thm:trees}.  Let us first describe the
general technique.

In all problems we will assume we are supplied either a tree decomposition of
width $w$ or a clique-width expression with $w$ labels. An important property
of all the problems we consider is that they admit an exact dynamic program
that takes time (roughly) $n^w$. These dynamic programs calculate a set of $w$
integers on each node of the decomposition by adding previously calculated
values or values read from the graph. The idea is to reuse this dynamic
program, but round all values to integer powers of $(1+\delta)$ and perform all
additions with the $\oplus$ operation. Note that, we are in fact able to also
handle some other auxiliary operations besides addition (such as comparisons).
It will, however, be important that our dynamic programs avoid using
subtractions (because then we lose the approximation guarantee) and part of our
effort is devoted to achieving this.

How do we analyze the approximation ratio of such an algorithm? As usual, each
node of the tree decomposition or the clique-width expression represents a
subgraph of the original graph. For each problem, we have a notion of partial
solution confined to a subgraph. We want to show that for each partial solution
there exists (with high probability) an entry in the approximate dynamic
programming table that matches its value and vice-versa. This is where Addition
Trees become useful. For each partial solution value we define inductively an
Addition Tree $T$ whose root calculates that value. We then inductively
establish that the approximate dynamic programming table contains some value
that \emph{follows the same distribution} as the result of $T$, if $T$ is
viewed as an Approximate Addition Tree. Thus, the approximate dynamic
programming table contains (with high probability) an almost correct value. To
simplify things, by tweaking the parameters appropriately we can make the
probability high enough that \emph{all} the values of the approximate table are
close to being correct, assuming $w$ is not too large.  Observe that this
allows the analysis to be performed using essentially the same inductive
reasoning as for standard (exact) dynamic programming algorithms.

What remains to say is what values we select for the parameter $\delta$. Here
we have two choices. In the general case, we set $\delta$ to the value dictated
by Theorem \ref{thm:trees}. This works quite well essentially all the time, as
we can guarantee that with high probability the Addition Trees we consider
during the analysis of our algorithm will have almost correct values.  We can
thus obtain randomized approximation schemes for both clique-width and
treewidth with the promised running time of roughly $(\log n/\eps)^{O(w)}$.
This is the general technique we consider to be the main contribution of this
paper.

Nevertheless, as mentioned there exists an important special case where things
can become simpler. It is known that for any graph of treewidth $w$ there
exists a tree decomposition of width $O(w)$ and only logarithmic (in $n$)
height \cite{BodlaenderH98}. 
Thus, another approach available to us is to use this theorem to first balance
the decomposition and then rely on Theorem \ref{thm:easy}, instead of the more
general Theorem \ref{thm:trees}.  Unfortunately, this does not speed up our
algorithms significantly (because the larger $\delta$ given in Theorem
\ref{thm:easy} is almost out-weighed by the blow-up in the width of the
decomposition). Importantly, it is impossible to apply this approach to
clique-width, because similar balancing results blow up the number of labels
used exponentially (\cite{CourcelleK07}), which would result in an unreasonable
running time of roughly $(\log n)^{2^w}$ (and this is likely to be
unavoidable). More generally, relying on a treewidth-specific balancing theorem
detracts from the bigger point of developing a general technique, since there
are many graph widths for which balancing would not even make sense
(e.g.~pathwidth).  Despite these shortcomings, we still believe there may be
some value in this treewidth-specific balancing trick and mention it as an
alternative approach because it allows us to get rid of randomization in this
case. To the best of our knowledge, this is the first application of tree
decomposition balancing theorems to approximation algorithms (the main
motivation for their study so far was parallel and distributed algorithms). It
would be interesting to see if in the future a combination of a balancing
preprocessing step with the ideas of Theorem \ref{thm:trees} could lead to
better running times or more derandomized algorithms.

In the rest of this section we first, for the sake of completeness, give a
complete proof of one algorithmic result, namely the approximation scheme for
\MC. We then simply sketch the dynamic programs for all other problems, since
the analysis is similar, highlighting some interesting details. We also give
some brief remarks explaining why some algorithms cannot (easily) be extended
to clique-width and why for some problems we only get bi-criteria
approximations.

\subsection{Max Cut}

In this section we attack the \MC\ problem parameterized by clique-width. In
\MC\ we are given a graph $G(V,E)$ and are asked to find a partition of $V$
into $L,R\subseteq V,\ L=V\setminus R$ such that the number of edges with
exactly one endpoint in $L$ is maximized. We assume that a clique-width
expression for $G$ with $w$ labels is supplied as part of the input.

We will follow the straightforward dynamic program for this problem, as used
for example in \cite{FominGLS10}. The idea is to describe a partial solution by
keeping track of the intersection of each label set with $L$. The first
difference is of course that we will round all values to save time. The second
(somewhat counter-intuitive) difference is that for each label set we will keep
track of the size of its intersection with \emph{both} $L$ and $R$. In the
exact program this would be wasteful, since one of these two numbers can be
calculated from the other, using the (known) size of the whole label set.
However, here we are trying to implement a dynamic program that relies only on
additions. As a side effect, our dynamic program will likely include solutions
which are clearly incorrect (the sum of the sizes of the two intersections
being different from the size of the set) but this will not be a major problem
if we can guarantee that all values are at most $(1+\eps)$ far from a valid
solution.

\subsubsection*{Dynamic Program} 

As mentioned, we view the clique-width expression as a rooted binary tree.
First, define the set $B:=\{0\}\cup \{ (1+\delta)^j\ |\ j\in\mathbb{N}\}$.
Informally, $B$ is the set of rounded values that may appear in our table. Even
though $B$ is infinite, whenever the algorithm produces an entry with value
larger than $(1+\eps)n^2$ we simply drop that entry. It will not be hard to see
that this will not affect the analysis if all entries are within a $(1+\eps)$
factor of being correct (then nothing will be dropped). Observe that the size
of $B$ then becomes $\log_{(1+\delta)}(n^2) = \mathrm{poly}(\log n/\eps)$, if
we set $\delta$ according to Theorem \ref{thm:trees}.

The dynamic programming table $D_i$ for a node $i$ is a subset of $B^w\times
B^w\times B$. The informal meaning is that an entry $(\vec{l},\vec{r},c)\in
D_i$ if and only if there exists a partition of the subgraph of $G$ represented
by the sub-tree rooted at $i$ into $L_i,R_i$ such that:

\begin{enumerate}

\item for all $l\in\{1,\ldots,w\}$ the number of vertices in $L_i$
(respectively $R_i$) with label $l$ is roughly $\vec{l}(l)$ (respectively
$\vec{r}(l)$)

\item the number of edges with exactly one endpoint in $L_i$ is roughly $c$

\end{enumerate}

A dynamic programming algorithm can now be formulated in a straightforward way.
It is easy to fill the table for initial nodes. For Rename and Union nodes the
exact dynamic program would perform some addition, which we now replace with
the $\oplus$ operation. For example, consider a Rename node  with labels
$l_1\to l_2$. For each entry $(\vec{l},\vec{r},c)$ in the child's table we
create an entry $(\vec{l'},\vec{r'},c)$ in the current node as follows: we set
$\vec{l'}(l_1):=0$, $\vec{l'}(l_2):= \vec{l}(l_1)\oplus \vec{l}(l_2)$ and
$\vec{l'}$ the same as $\vec{l}$ for other labels to make the vector $\vec{l'}$
of a new entry (similarly for $\vec{r'}$).  In the same way, for a Union node,
for each entry $(\vec{l_1},\vec{r_1},c_1)$ in the first child's table and for
each entry $(\vec{l_2},\vec{r_2},c_2)$ in the second child's table we construct
an entry $(\vec{l_1}\oplus \vec{l_2}, \vec{r_1}\oplus\vec{r_2}, c_1\oplus
c_2)$, where $\oplus$ is applied component-wise for vectors.

For Join nodes with labels $l_1,l_2$ we do something similar. For each entry
$(\vec{l},\vec{r},c)$ in the child's table construct a new vector with the same
$\vec{l},\vec{r}$ and $c':= c \oplus (\vec{l}(l_1)\cdot \vec{r}(l_2) +
\vec{l}(l_2)\cdot \vec{r}(l_1))$.  Note that if the elements of
$\vec{l},\vec{r}$ are known with error at most $(1+\eps)$ then the second term
of this addition is known with error at most $(1+\eps)^2\approx 1+2\eps$.

\subsubsection*{Analysis}

First, observe that because the running time of the algorithm is polynomial in
the size of the tables, the algorithm clearly achieves the running time stated
in Theorem \ref{thm:cw-main}. We only have to prove its approximation
guarantee.

Any node $i$ of the clique-width expression defines a subgraph $G_i$ of $G$
(the graph produced by the sub-expression rooted at $i$). A partial solution is
simply a restriction of a solution $L,R$ for $G$ to the vertices of $G_i$. The
\emph{signature} of a partial solution is a vector of $2w+1$ integers that
would represent this solution in an exact dynamic programming table. In other
words, the signature is the entry we would expect to see in the table $D_i$ if
we were not rounding. In this case, it contains the exact size of the
intersections of $L,R$ with each label set and the size of the cut in $G_i$. To
keep things simpler, we will only be concerned with the $2w$ values that
represent the intersection. As mentioned, if we can get these right, the
algorithm also gets the size of the cut right within a factor of roughly
$(1+2\eps)$.

Consider a partial solution and its signature. Our strategy is to inductively
define a mapping that gives for each such signature a collection of $2w$
Addition Trees. The exact results of these trees will be the values of the
signature (that is, the sizes of the intersections of the two sets with each
label). We will then establish (by induction) that there exists an entry in our
algorithm's table whose values follow the same distribution as those Trees, if
they are viewed as Approximate Addition Trees.  It will follow that for every
partial solution there exists, with high probability, an entry with roughly the
same values. The converse statement can be shown with essentially the same
ideas.

The above can clearly be done for initial nodes, where all values stored are 0
or 1, so our algorithm stores them exactly. The relevant Addition Trees are of
height 0. Assume inductively that we have established the correspondence up to
a certain height in the clique-width expression. Let us then consider a Rename
node $i$ with labels $l_1\to l_2$ and a child $j$.  Consider a partial solution
to $G_i$.  This partial solution has some corresponding partial solution in
$G_j$.  By the inductive hypothesis, there exists an entry in
$(\vec{l},\vec{r},c)\in D_j$ corresponding to this solution.  In particular,
there exist Approximate Addition Trees whose results have the same distribution
as $\vec{l}(l_1)$ and $\vec{l}(l_2)$ and whose exact values are the same as the
corresponding values in the partial solution to $G_j$.  Consider the Tree
obtained from these by adding a new root and making the old roots its children.
This Tree now follows the same distribution as the value $\vec{l'}(l_2)$
calculated by our algorithm.  Its exact value is the value we would get in the
exact signature (since the same was true for the sub-trees).  Reasoning in the
same way about the other coordinates we have completed the inductive step in
this case.

The cases of Join and Union nodes can be handled with similar inductive
arguments as above. We can thus establish that for any valid partial solution
signature there exists an entry of the table that approximately corresponds to
it, in the sense that their respective values are connected through Addition
Trees. It is also not hard to use inductive reasoning to also establish the
converse (for every approximate entry there exists a corresponding partial
solution).  We now select the entry in the root's table that has maximum $c$.
With high probability, it must correspond to a solution with approximately the
same cut size (otherwise we can find an Approximate Addition Tree with large
error). Retracing the steps of the dynamic programming we can turn this entry
into an actual cut.

\subsection{Edge Dominating Set}

Let us now give a dynamic programming algorithm for \EDS. Here, things are a
little trickier, because it's not immediately obvious that using subtractions
can be avoided. We will use the following equivalent version of the problem: we
are looking for a minimum-size set of vertices $S$ such that $S$ is a vertex
cover of $G$ and $G[S]$ has a perfect matching. It is not hard to see that this
is the same problem (intuitively, $S$ is the set of vertices incident on an
edge of the edge dominating set) because an optimal edge dominating set is also
a maximal matching.

We define the dynamic programming table $D_i$ for a node $i$ as a subset of
$(B\cup\{F\})^w\times B^w$, where $F$ is a special symbol.  Let $G_i$ be the
corresponding subgraph.  Fix a solution to the problem in $G$, that is a vertex
cover $S$ and a matching of its vertices $M$.  The intended meaning is that an
entry $(\vec{s},\vec{c})\in D_i$ represents this solution if

\begin{enumerate}

\item for all $l\in\{1,\ldots,w\}$ the number of vertices of $G_i$ with label
$l$ included in $S$ is roughly $\vec{s}(l)$. If $\vec{s}(l)=F$ then \emph{all}
vertices with label $l$ are in $S$.

\item for all $l\in\{1,\ldots,w\}$ the number of vertices of $G_i$ with label
$l$ incident to an edge of $M\cap G_i$  is roughly $\vec{c}(l)$

\end{enumerate}

Informally, $\vec{s}$ tells us how many vertices we have selected in $S$ from
each label set, while $\vec{c}$ tells us how many of the selected vertices have
already been matched. Clearly, the intended meaning implies that $\vec{c}(j)\le
\vec{s}(j)$ for all $j$, but this may not always be the case. We will still be
happy if this is true up to an error factor $(1+\eps)$. In the calculations
below, when we perform arithmetic operations on a value $\vec{s}(l)$ that is
equal to $F$ we substitute it with the size of the set $V_l$ of vertices with
label $l$.

Let us now describe the algorithm. Initial nodes are easy ($\vec{c}$ is
all-zero, $\vec{s}$ can have a single coordinate that is 0 or F). For Rename
and Union nodes we just have to add (component-wise) appropriate entries,
taking care to maintain $F$ values if possible.  The interesting case is Join
nodes.

Let $i$ be a join node with labels $l_1 \leftrightarrow l_2$ and child $j$.
For each entry $(\vec{s},\vec{c})\in D_j$ do the following. Let $V_{l_1},
V_{l_2}$ be the set of vertices with label $l_1,l_2$ respectively in $G_i$. If
$\vec{s}(l_1) \neq F$ and $\vec{s}(l_2) \neq F$ then ignore this entry because
this partial solution is not a vertex cover of $G_i$.  Otherwise, for each
$m\in\{0,\ldots,\min(|V_{l_1}|, |V_{l_2}|)\}$ calculate a vector $\vec{c_m}$
which is identical to $\vec{c}$ except that $\vec{c_m}(l_1)=\vec{c}(l_1)\oplus
m$ and $\vec{c_m}(l_2)=\vec{c}(l_2)\oplus m$. Informally, we are selecting how
many of the join edges will eventually be used in the matching $M$. If we have
$\vec{c_m}(l_1)>(1+\eps)\vec{s}(l_1)$ or $\vec{c_m}(l_2)>(1+\eps)\vec{s}(l_2)$
ignore $\vec{c_m}$. Otherwise, add $(\vec{s},\vec{c_m})$ to $D_i$.

Once the root's table has been calculated we select the entry
$(\vec{s},\vec{c})$ such that $\sum_l \vec{s}(l)$ is minimized, among entries
where $\vec{c}(l) \ge \vec{s}(l)/(1+\eps)$. Retracing the steps of the dynamic
programming we then obtain a vertex cover $S$.  In polynomial time we can
calculate a maximum matching in $G[S]$.  This is our initial candidate
solution. If some vertex of $S$ is unmatched we add one of the edges connecting
it to $V\setminus S$. We now have an edge dominating set.

We are now faced with two problems. First, we need to prove that $S$ has
roughly the same number of vertices as an optimal solution. Second, we need to
prove that $G[S]$ contains an almost perfect matching (more precisely, it
contains a set of edges $M$ such that almost all vertices are incident on one
edge of $M$).  We can again rely on an inductive analysis using Approximate
Addition Trees, as in the case of \MC.  The main difference is that our
algorithm now occasionally drops some partial solutions.  This happens in the
case of Join nodes when we have not selected enough vertices of
$V_{l_1},V_{l_2}$ to produce a proper vertex cover.  This step is easily seen
to be correct, as there is no approximation involved.  Alternatively, solutions
are dropped when we are trying to select too many of the new edges in $M$.  In
this case, we leave enough ``slack'' in our comparisons so that if a solution
is erroneously dropped we can extract an Approximate Addition Tree with high
error, something that can only happen with low probability.  Barring this, we
can assume that all partial solutions are represented in the root's table and
each table entry approximately corresponds to a solution.  Thus, the solution
we select in the end will have an almost optimal size for $S$ and contain an
almost perfect matching.

\subsection{Equitable Coloring}

This problem admits a very simple additive dynamic program. Let us briefly
describe it. In the case of clique-width, for each node the table is a subset
of $(B^k)^w$. Informally, the signature of a partial coloring is the number of
vertices of each color contained in each label set. We can guarantee to produce
a valid coloring if we make sure that in all Join nodes we drop partial
solutions that use the same color somewhere in both label sets. This can be
done without a problem because $0$ values are stored exactly in our table. In
the end, we pick the table entry appearing to give the most equitable coloring
and extract a coloring by retracing the dynamic program. The running time is
$(\log n/\eps)^{O(kw)}$.

Let us also note that we can give a faster algorithm for treewidth. Here the
dynamic program needs to remember the size of each color class and the coloring
of the $w$ vertices of a bag. Thus, the table is a subset of $B^w\times
\{1,\ldots,k\}^w$, leading to a running time of $(k\log n/\eps)^{O(w)}$.

\subsection{Capacitated Dominating Set and Vertex Cover}

Let us first describe the algorithm for \CDS\ on clique-width promised in
Theorem \ref{thm:cw-main}. The dynamic programming table $D_i$ is a subset of
$B^w\times B^w\times B^w\times\{0,\ldots,n\}$. Informally,
$(\vec{a},\vec{u},\vec{d},c)\in D_i$ if there exists a set $S$ of
\emph{exactly} $c$ vertices and a partial mapping of vertices of $V\setminus S$
to $S$ such that

\begin{enumerate}

\item The total capacity of vertices of $S$ with label $l$ is roughly
$\vec{a}(l)$ 

\item The number of vertices with label $l$ in $V\setminus S$ mapped to $S$
(i.e.~dominated) is roughly $\vec{d}(l)$

\item The number of vertices of $V\setminus S$ mapped to vertices in $S$ with
label $l$ is roughly $\vec{u}(l)$.

\end{enumerate}

More simply, $\vec{a}$ is the total available capacity in a label set,
$\vec{u}$ is the capacity already used in our partial solution and $\vec{d}$ is
the number of vertices of each label set that we have already covered.

It is not hard to see how to fill this table for Initial, Rename and Union
nodes (only additions are needed). For Join nodes, the interesting point is
that we need to decide how many of the new edges are used. Similarly to the
case of \EDS, for each possible number $m$ we add $m$ to the \emph{used}
capacity $\vec{u}$ on one side and the number of \emph{dominated} vertices
$\vec{d}$ on the other. We drop solutions which are clearly (by more than a
factor $1+\eps$) invalid. Note that, because for each label set we only care
about its \emph{total} capacity, not the number of selected vertices, we can
afford to keep track of the total size of the dominating set exactly. This only
adds a polynomial factor to the algorithm's running time.

In the end we extract a solution from the root's table and argue about its
correctness as usual. One complication is that, since we do not know $\vec{a}$
and $\vec{d}$ exactly, the solution may be violating some capacities and it may
not be a complete dominating set.  We repair the solution by allowing some more
vertices to be undominated. If the total capacity of each label set was
violated by no more than $(1+\eps)$ this drops at most $\eps n$ vertices.
Similarly, if the solution was not a complete dominating set (and one exists),
at most $\eps n$ vertices are not dominated. We thus get the promised
bi-criteria guarantee.

Let us also discuss the case of treewidth (Theorem \ref{thm:tw-main}). Here the
dynamic program for \CDS\ is easier. We just need to remember which of the
vertices of each bag have been selected and how much of their capacity has
already been used. The result is a set that dominates the whole graph, has size
at most $\OPT$, but may violate some capacities by $(1+\eps)$. Observe that
this is stronger than the bi-criteria approximation we get for clique-width
since, again, we could drop the coverage for some vertices to fix the
capacities (but in general, it's not clear if we can trade in the other
direction).

It's worthy of note that for \CDS\ we can probably not hope to obtain anything
better than a bi-criteria approximation, such as the one we gave here. The
reason is that in \cite{DomLSV08} it is shown that \CDS\ is W-hard even if
parameterized by \emph{both} the treewidth and $\OPT$. Thus, if we could obtain
an FPT approximation scheme for $\OPT$ without violating constraints, we would
be able to set $\eps$ to an appropriate small value and obtain an FPT algorithm
for a W-hard problem.

Finally, let us mention the \CVC\ algorithm promised in Theorem
\ref{thm:tw-main}.   This can be obtained by reducing \CVC\ to a
vertex-weighted version of \CDS. First, observe that if we associate with each
vertex of an instance of \CDS\ a non-negative integer cost and ask for a
solution of minimum total cost the algorithm we gave still works with minimal
modification. Consider now a \CVC\ instance. First, sub-divide each edge (this
does not increase the treewidth $w$) and set the capacities of new vertices to
$0$.  Then, add a new vertex $u$ and connect it to all of the original vertices
of the instance (this increases $w$ by 1). Set the capacity of $u$ to be equal
to its degree.  Finally, set the cost of each original vertex to $1$, the cost
of $u$ to $0$ and the cost of vertices constructed in the sub-divisions to $n$.
If we view this as an instance of weighted \CDS\ it is straightforward that a
solution of cost $s<n$ exists if and only if the original \CVC\ instance had a
solution of size $s$. 

It would be interesting to see if an algorithm for \CVC\ on clique-width can be
given. At the moment this seems more challenging than for \CDS, because it is
not sufficient to remember the total available capacity of a whole label set.

\subsection{Bounded Degree Deletion}

A dynamic program for clique-width can be given as follows: each table is a
subset of $B^w\times B^w\times \{0,\ldots,n\}$. Informally, a partial solution
is a set of ``active'' vertices which will \emph{not} be deleted. We remember
how many of these we have in each label set. We also remember what is the
maximum degree of any active vertex in each label set. Finally, we remember the
total number of vertices we have deleted. The only interesting case is Join
nodes, where the new maximum degree is calculated by adding to the previous
maximum degree the number of active vertices in the other label set.

A dynamic program for treewidth can be formulated by keeping track of the
active vertices inside a bag. For each one of these we remember (approximately)
the number of active neighbors it has in the set of vertices that appear in
bags lower in the decomposition. The running time for both clique-width and
treewidth is $(\log n/\eps)^{O(w)}$.

Let us remark that, even though we only present a bi-criteria approximation
algorithm here, to the best of our knowledge there is no complexity barrier
ruling out the existence of an approximation scheme for the natural objective
of this problem (the size of the deletion set). 

\subsection{Graph Balancing}

We only deal with this problem for treewidth. We assume that edge weights are
written in unary (this is the interesting case that is usually studied to avoid
unnecessary complications).  For a bag of the tree decomposition, the signature
of a partial solution is the weighted out-degree each vertex of the bag has
already accumulated due to edges whose other endpoint appears lower in the
decomposition. We keep track of this information and the maximum seen so far,
making the table be a subset of $B^{w+1}$. When a vertex is forgotten (that is,
we move to the first bag that does not contain it) we go through every
orientation of the edges connecting it to the rest of the bag and update the
out-degrees of other vertices accordingly. The running time is $(\log
n/\eps)^{O(w)}$ and we can extract a solution with out-degree at most
$(1+\eps)\OPT$.

Let us remark that for the case of clique-width, it's not possible to achieve a
similar result. It is known that even for graphs with edge weights 1 and 2 it
is NP-hard to tell if $\OPT$ is 2 or 3 \cite{EbenlendrKS08}.  Take an arbitrary
such instance, multiply all edge weights with $n^2$ and then replace every
non-edge with an edge of weight 1. It is now NP-hard to tell if $\OPT$ is at
most $2n^2+n$ or at least $3n^2$, so a better than $3/2$ approximation is
NP-hard even for cliques.

\section{Conclusions}

We have presented a generic technique which can be applied with minor
modifications to a number of hard problems. The question now becomes to which
other problems we can apply this technique. The most prominent next target is
\textsc{Hamiltonicity} parameterized by clique-width.  This is another example
that separates clique-width from treewidth in term of parameterized
tractability, and its natural dynamic program uses integers. Unfortunately, the
natural program also uses subtractions, so it does not immediately yield to our
methods. In several of the problems of this paper we managed to work around
such obstacles, rewriting the natural program to only use additions. Is this
possible here, or are completely new ideas needed?

Let us also mention that in this paper we have focused on establishing that our
algorithms run in $(\log n/\eps)^{O(w)}$, but we have not been paying much
attention to the constant hidden in the exponent.  Another good direction would
therefore be to improve the analysis of Approximate Addition Trees in order to
obtain better running times for our schemes. Perhaps combining this with the
idea of (partially) balancing the given decomposition could also help speed up
the overall algorithms.

\subsubsection*{Acknowledgment} I am grateful to an anonumous reviewer for
pointing out that treewidth balancing theorems can be used to obtain some of
the results of this paper in a simpler way.

\bibliographystyle{plain}
\bibliography{fpt-as}

\end{document}